\documentclass{IEEEtran}

\usepackage{amsmath, amsfonts, amsthm}
\usepackage{algorithm}
\usepackage{array}
\usepackage[caption=false,font=normalsize,labelfont=sf,textfont=sf]{subfig}
\usepackage{textcomp}
\usepackage{stfloats}
\usepackage{url}
\usepackage{verbatim}
\usepackage{graphicx}
\usepackage{cite}
\usepackage{booktabs}

\usepackage{xcolor}
\usepackage{amssymb}
\usepackage{mathtools}
\usepackage{algpseudocode}
\usepackage{tabularx}
\usepackage{xurl}
\usepackage{hyperref}

\newtheoremstyle{mytheorem}
{3pt}                
{3pt}                
{}        
{}                
{\bfseries}       
{.}               
{ }               
{}                
\theoremstyle{mytheorem}
\newtheorem{thm}{Theorem}

\theoremstyle{definition}
\newtheorem{define}{Definition}
\newtheorem{assume}{Assumption}
\theoremstyle{remark}

\newcommand{\R}{\ensuremath{\mathbb{R}}}

\def\T{\top}
\def\bmat{\begin{bmatrix}}
\def\emat{\end{bmatrix}}

\newcommand{\para}[1]{\left(#1\right)}		

\title{Distributed Traffic State Estimation in Connected Vehicle and Roadside Infrastructure Networks}

\author{Vincent de Heij, M. Umar B. Niazi, Saeed Ahmed,  Karl H. Johansson

\thanks{This work was supported by NXTGEN HighTech growthfund: Autonomous Factory, the Swedish Research Council's Distinguished Professor Grant, the Knut and Alice Wallenberg Foundation's Wallenberg Scholar Grant,  Digital Futures' Summer Research Internship Programme, and the Holland High Tech (TKI HTSM) strategic program PPS-I Flex HighTech under the project number 24PPS173-CABS.}
\thanks{Vincent de Heij and Saeed Ahmed are with the Engineering and Technology Institute Groningen, Faculty of Science and Engineering, University of Groningen, 9747 AG Groningen, The Netherlands (emails: v.de.heij@rug.nl, s.ahmed@rug.nl).}
\thanks{M. Umar B. Niazi and Karl H. Johansson are with the Department of Decision and Control Systems and with Digital Futures, KTH Royal Institute of Technology, SE-100 44 Stockholm, Sweden (emails: mubniazi@kth.se, kallej@kth.se).
}%
}

\begin{document}

\maketitle

\begin{abstract}
This paper proposes a distributed traffic state estimation framework that combines
infrastructure sensors and connected vehicles as cooperative
sensing nodes. Using Vehicle-to-Everything (V2X) communication, nearby nodes exchange local estimates and update them through a distributed Kalman filter designed for a second-order macroscopic traffic flow model. A consensus step fuses heterogeneous information across the network, while projection steps enforce physically consistent traffic states.
We evaluate the method on HighD and NGSIM data, and on microscopic SUMO simulations that capture transient congestion. The results show accurate reconstruction of highway traffic states and detection of nonlinear shockwave dynamics, even with sparse infrastructure sensing and intermittent vehicular connectivity. A statistical analysis further shows how CV penetration rate, V2X communication range, and infrastructure deployment affect estimation accuracy. In particular, with 10\% CV penetration, V2X ranges of 300-400 m, and sparse infrastructure deployment, the combined infrastructure-vehicle configuration consistently outperforms approaches that rely only on infrastructure or only on connected vehicles.

\end{abstract}

\begin{IEEEkeywords}
Distributed traffic state estimation;
connected vehicles;
roadside units;
V2X communication;
distributed Kalman filter;
second-order traffic flow model;
ARZ model.
\end{IEEEkeywords}

\section{Introduction}\label{sec:introduction}
Modern intelligent transportation systems, especially those using cooperative driving strategies for connected and automated vehicles \cite{lee2025}, require accurate real-time macroscopic traffic states such as density, velocity, and flow.
Because deploying fixed roadside sensor units (RSUs) to satisfy observability requirements is expensive, traffic state estimation (TSE) reconstructs these states by fusing data from existing sensors. RSUs, such as induction loops, radars, and cameras, provide accurate but sparse and stationary measurements. Connected vehicles (CVs), equipped with onboard sensors such as GPS, cameras, and LiDAR, provide local microscopic observations. Combining both sources enables multi-scale monitoring of traffic conditions.

Until recently, TSE has relied mainly on fixed infrastructure, with mobile data supplied only intermittently by probe vehicles that report aggregated trajectories rather than real-time measurements \cite{Keimer2018}. 
Most existing methods also use a centralized fusion architecture, where a transportation management center acts as a global estimator. This approach does not scale well and ignores key communication constraints in V2X networks. V2X protocols such as DSRC and C-V2X support only short-range direct communication, typically over a few hundred meters
\cite{ansari2021, zadobrischi2024, mendes2025,maglogiannis2022}.
As CV penetration increases \cite{statista2025},
the resulting data volume can exceed available bandwidth, causing latency and high computational cost \cite{arthurs2021}.
These limits make centralized estimation unsuitable for real-time control and create a single point of failure.

These issues motivate a distributed TSE (DTSE) architecture in which RSUs and CVs estimate traffic states locally by combining their own measurements with information received from nearby V2X nodes. This design matches the local nature of V2X connectivity and reduces the need to transmit raw sensor data to a central TMC. DTSE must still handle heterogeneous measurements, communication limits, and time-varying network topology caused by vehicle motion. The main challenge is therefore not just measurement sparsity, but also processing dense local data streams under strict latency constraints, especially in dynamic, non-equilibrium traffic.

Figure~\ref{fig:scenario} illustrates the value of the distributed architecture. A downstream bottleneck creates congestion and a backward-moving shockwave as the ego-CV approaches from upstream. Because the congestion is outside the vehicle’s sensing range, its local sensors cannot detect it. If DTSE provides an accurate global state estimate, the ego-CV can anticipate the shockwave and decelerate before reaching the bottleneck. This helps avoid adding to the congestion and allows the vehicle to act as a control input that helps dissipate the wave~\cite{lee2025}.

\begin{figure}[!t]
    \centering
    \includegraphics[width=\linewidth]{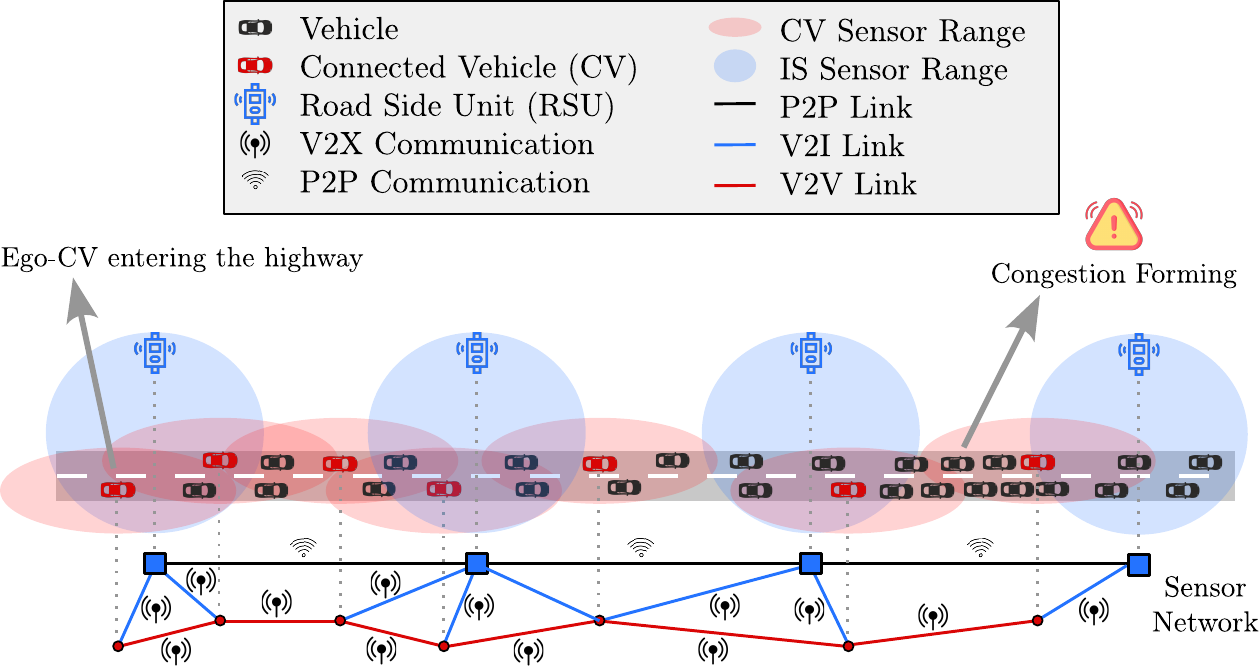}
    \caption{Motivating DTSE scenario on a V2X-enabled highway: an ego-CV upstream of a bottleneck cannot observe the resulting backward-propagating shockwave from its onboard sensors alone, but can reconstruct it by fusing local measurements with estimates from nearby RSUs and CVs.}
    \label{fig:scenario}
\end{figure}

\subsection{Related Work}
TSE methods are commonly grouped into model-based, data-driven, and hybrid approaches~\cite{seo2017review,di2021}. Since the present work is model-based, we first review methods that use macroscopic traffic-flow models to propagate traffic states over space and time. First-order models, such as the Lighthill-Whitham-Richards (LWR) model~\cite{lighthill1955,richards1956} and the cell transmission model (CTM)~\cite{daganzo1994}, describe traffic dynamics through a single conservation law for vehicle density and have been widely adopted in estimation frameworks, including ensemble and extended Kalman filtering, Lagrangian probe-data assimilation, and scalable localized estimation schemes~\cite{Work2008,Herrera2010,hinsbergen2012,Yuan2012}. However, their equilibrium speed-density assumption limits their ability to capture non-equilibrium traffic dynamics.

To address this limitation, second-order models, such as the Payne-Whitham (PW) model~\cite{payne1971,whitham1999} and its network-oriented implementations, including METANET~\cite{Kotsialos}, introduce an additional dynamic that accounts for driver anticipation and deviations from equilibrium, making them more suitable for representing non-equilibrium phenomena such as stop-and-go waves and shockwave propagation. However, the PW model exhibits physical inconsistencies under certain traffic conditions~\cite{daganzo1995}, which the Aw-Rascle-Zhang (ARZ) model~\cite{aw2000,zhang2002} was introduced to overcome. Owing to this physical consistency, the ARZ model has become the preferred second-order basis for TSE: heterogeneous data fusion was first addressed with an extended Kalman filter~\cite{seo2017}, later extended to a nonlinear ARZ state-space formulation with on- and off-ramp dynamics in which heterogeneous RSU and CV measurements are fused via moving horizon estimation~\cite{vishnoi2024}. Related second-order approaches include particle filtering with generalized ARZ-type dynamics for mixed human-driven and automated traffic~\cite{Wang2017} and boundary-observer design for the ARZ model~\cite{Yu2021}.



Most model-based TSE methods above are formulated around fixed infrastructure measurements, with mobile observations entering mainly as Lagrangian probe data. Since RSUs provide sparse and stationary views of traffic, several works use probe vehicles or CVs to provide measurements that move with the traffic stream. Existing approaches estimate traffic states and the fundamental diagram (FD) parameters from probe-vehicle spacing measurements~\cite{Seo2015}, recover density and flow in mixed traffic from CV speed reports and a small number of fixed flow measurements~\cite{bekiaris2016}, infer density and level of service from CV-generated distance headway, speed, and stop-count data~\cite{Khan17}, and use extended floating-car data in Bayesian TSE~\cite{kyriacou2023}. Together with the ARZ-based heterogeneous-fusion frameworks discussed above~\cite{seo2017,Wang2017,vishnoi2024}, these studies show that mobile sensing can enrich TSE beyond fixed-location measurements alone.

Despite this progress, many existing methods remain centralized, with measurements collected at a TMC before the traffic state is reconstructed. This limits scalability in V2X-enabled traffic systems, where sensing and communication are localized, mobile, and bandwidth-constrained. DTSE has therefore emerged as an alternative, with existing approaches including distributed switched observers, distributed Kalman filters, distributed state observers, and boundary observers for linearized ARZ models ~\cite{vivas2015,sun2014,sun2017,guo2017,zhang2020}. Among these methods, distributed Kalman filtering is especially relevant because it retains the recursive, uncertainty-aware structure of centralized Kalman filtering while replacing global measurement fusion with local prediction, correction, and neighbor-to-neighbor information exchange. 
However, most distributed TSE studies focus on fixed P2P networks between infrastructure sensors. Mixed RSU-CV networks, which introduce mobile sensing and estimation nodes and induce a time-varying V2X communication topology, have received comparatively less attention.

Data-driven TSE methods bypass explicit traffic-flow modeling and instead learn mappings from observed measurements to unobserved traffic states, often by exploiting recurring spatiotemporal patterns in historical data. Examples include detector-error diagnosis, loop-data imputation, tensor-based completion, and section-level $k$-nearest-neighbor imputation~\cite{Chen2003,Ran2016,Tak2016}. More recent hybrid learning methods combine data-driven estimation with traffic-flow physics, for example through physics-informed neural networks based on LWR or ARZ dynamics and neural-operator-based observers with sparse RSU measurements~\cite{Shi2022,Zhang2024,giraldo2026,barreau2021,Harting2025}. 
However, these learning-based methods rely on representative training data and typically require centralized training or global access to measurements,
limiting their applicability to online estimation over local, time-varying communication links.

\subsection{Our Contributions}
To address the challenge of online traffic state estimation in V2X networks of CVs and infrastructure, where sensing and communication are local, heterogeneous, and time-varying, we propose a DTSE framework. The method combines a distributed Kalman filter tailored to second-order traffic-flow dynamics to reconstruct density and velocity fields from local measurements and V2X exchange.

The main contributions of this paper are threefold. 
First, we introduce a decentralized traffic‑state estimation architecture in which both fixed RSUs and mobile CVs act as active sensing and estimation nodes. 
Instead of streaming raw measurements to a distant TMC, each node exchanges local traffic‑state estimates only with its nearby V2X neighbors, thereby aligning the estimation architecture with the inherently local connectivity of V2X networks.
Second, we develop a distributed Kalman filtering framework tailored to the second‑order ARZ traffic flow model. 
This estimator fuses heterogeneous measurements from RSUs and CVs to reconstruct complex non‑equilibrium traffic phenomena, such as stop‑and‑go waves and nonlinear shockwave propagation. 
In doing so, it preserves the uncertainty‑aware, recursive structure of Kalman filtering while distributing computation and data fusion across the V2X network.
Third, we provide an extensive validation of the proposed framework under realistic V2X operating conditions. Using real-world trajectory datasets (HighD and NGSIM) together with SUMO simulations, we demonstrate that fixed infrastructure nodes can reconstruct dominant spatiotemporal traffic patterns and that an ego-CV can anticipate an approaching shockwave beyond its onboard sensing range by leveraging V2X-distributed state estimates. Finally, through a Monte Carlo study, we quantify how estimation accuracy depends jointly on CV penetration rate, V2X communication range, and realistic spatial infrastructure-sensor deployments. The results show that combined infrastructure and CV configurations improve upon estimation settings that rely exclusively on either fixed infrastructure or connected-vehicle, even under low CV penetration, finite V2X communication ranges, and sparse infrastructure coverage.

In our previous work \cite{Heij2025}, we presented a preliminary numerical study on a highway example to demonstrate the proof of concept of our DTSE algorithm. In contrast, the present work establishes theoretical guarantees and presents extensive numerical experiments on real data.


\subsection{Outline}
The remainder of the paper is organized as follows. Section~\ref{sec:traffic-model} presents the ARZ model and derives its discretized state-space formulation. Section~\ref{sec:sensor-network} describes the sensor network model for both RSUs and CVs. The proposed DTSE algorithm is presented in Section~\ref{sec:DTSE-algorithm}. Section~\ref{sec:experimental_setup} discusses the experimental setup. Section~\ref{sec:estimation_results} presents the estimation results under realistic V2X networks and traffic scenarios. Section~\ref{sec:statistical_evaluation} provides a statistical evaluation across different V2X networks, penetration rates, and RSU deployment configurations. Finally, Section~\ref{sec:conclusion} concludes the paper.

\subsection{Notation}
The set of real numbers is denoted by $\mathbb{R}$, and $\mathbb{R}^{n}$ and $\mathbb{R}^{m\times n}$ denote the spaces of real $n$-dimensional vectors and $m\times n$ matrices, respectively. For a matrix $A$, the transpose and inverse are written $A^{\top}$ and $A^{-1}$. A symmetric matrix $P = P^{\top} \in \mathbb{R}^{n\times n}$ is denoted as $P \succeq 0$ (resp., $P \succ 0$) if it is positive semi-definite (resp., positive definite); for two symmetric matrices $A, B$ we write $A \succeq B$ if
$A - B \succeq 0$. The $n\times n$ identity matrix is denoted by $I_{n}$, and
$e_{i}\in\mathbb{R}^{n}$ denotes the $i$-th canonical basis vector. The
Kronecker product is denoted by $\otimes$, and $\operatorname{diag}(A_{1},\dots,A_{m})$ denotes the block-diagonal matrix with blocks $A_{1},\dots,A_{m}$ on its diagonal. The Euclidean norm is denoted by $\|\cdot\|$, and $|\mathcal{S}|$ denotes the cardinality of a set $\mathcal{S}$. The expectation operator is $\mathbb{E}[\cdot]$, and $\mathcal{N}(\mu,\Sigma)$ denotes the Gaussian distribution with mean $\mu$ and covariance $\Sigma$. 

\section{Traffic Model} \label{sec:traffic-model}
This section reviews the ARZ model for describing highway traffic dynamics, which captures both the conservation of vehicles and the evolution of driver speed adaptation through a second-order macroscopic formulation. Then, following \cite{vishnoi2024}, we derive its discretized state-space formulation.


\subsection{ARZ Model} \label{subsec:ARZ}
The ARZ model \cite{aw2000,zhang2002} is a second-order macroscopic traffic flow model given by
\begin{align}
    \partial_t \rho + \partial_d (\rho v) &= 0  \label{eq:ARZpde1}\\
    \partial_t \psi +\partial_d \psi v &= -\frac{\rho(v-V_e(\rho))}{\tau} \label{eq:ARZpde2}
\end{align}
where $\rho(t,d)$ is the density (vehicles per unit road length) and $v(t, d)$ is the average speed (unit length per unit time), over time $t$ and spatial position $d$.  Let 
\begin{equation} \label{eq:driver-characteristic}
\chi \coloneqq v+p(\rho)
\end{equation}
denote the driver characteristic with $p(\rho)$ the pressure function that represents anticipatory driving behavior. Then, in~\eqref{eq:ARZpde2}, the auxiliary variable $\psi = \rho \chi$ denotes the relative flow, which is a momentum-like quantity that reflects deviations from the equilibrium traffic conditions, and  $\tau > 0$ is the relaxation time that determines how quickly drivers adapt their speed toward equilibrium. 

The pressure function is defined by
\begin{equation}\label{eq:pres_fun}
    p(\rho) = v_f \left(\rho / \rho_m \right)^\gamma,
\end{equation}
where $v_f$ is the free-flow speed, $\rho_m$ is the maximum jam density, and $\gamma>0$ is a fundamental diagram parameter.
Then, the equilibrium velocity at density $\rho$ is given by
\begin{equation}\label{eq:equilibrium-v-rho}
    V_e(\rho) = v_f \left(1-\left(\rho / \rho_m \right)^\gamma\right)
\end{equation}
The parameters $v_f, \rho_m, \gamma$ can be calibrated using empirical data and the fundamental diagram.

\subsection{Discretized State-Space Model}\label{subsec:ss}
Following \cite{vishnoi2024}, we discretize the ARZ model in space and time. The highway is partitioned into $N$ segments/cells of uniform length $\Delta h$, and the temporal domain is discretized using a sampling interval $\Delta t$. This results in a discretized state-space model with $\rho$ and $\psi$ as finite state vectors representing the average density and relative flow within the discretized cells. The discretized ARZ model can be written in state-space form as
\begin{align}\label{eq:ssARZ}
     x_{k+1} = Ax_{k} + Gf(x_k,u_k) + \omega_k
\end{align}
where $x_k = [x_{1,k}^\T,\ldots,x_{N,k}^\T]^\T\in \mathbb R^{2N}$ denotes the state vector of the highway discretized into $N$ cells, with 
$$
x_{i,k} = \begin{bmatrix}
 \rho_{i,k} &
 \psi_{i,k}
\end{bmatrix}^\T, \quad i=1,\ldots,N
$$
where $\rho_{i,k}$ is the average traffic density and $\psi_{i,k}$ the average relative flow within cell $i$ at time $k$. The variable $\omega_k$ denotes the process noise.

The input $u_k \in \mathbb{R}^3$ in \eqref{eq:ssARZ} specifies the boundary conditions of the model and is given by
\begin{equation}\label{eq:input}
   u_{k} = \begin{bmatrix}
   D_{0,k} & \chi_{0,k} & \rho_{{N+1},k} 
   \end{bmatrix}^\T
\end{equation}
where $D_{0,k}$ is the upstream demand, i.e., the number of vehicles per unit time entering the highway at time step $k$, $\chi_{0,k}$ is the driver characteristic \eqref{eq:driver-characteristic} at the upstream boundary, and $\rho_{N+1,k}$ specifies the density at the downstream boundary that governs the outflow of traffic. 

In standard TSE formulations, the input $u_k$ is typically assumed to be known at a centralized TMC. However, in the distributed setting considered here, only the boundary RSUs can directly measure the boundary conditions that determine $u_k$. 
In this paper, we assume that boundary RSUs broadcast $u_k$ to all nodes via a low-bandwidth, long-range communication channel (e.g., cellular LTE or LoRaWAN). 
This hierarchical communication architecture is well-established for ensuring scalability in heterogeneous vehicular networks \cite{abboud2016, zadobrischi2024}.
Since $u_k$ consists of only three scalar values, the bandwidth overhead for this global broadcast is negligible compared to the dense covariance matrices exchanged locally via short-range V2X.
In a practical deployment, this assumption can be relaxed by including the boundary variables in the consensus vector. Since the boundary RSUs can directly measure $u_k$, they act as leader nodes for these variables, allowing the true values to diffuse rapidly across the network via a consensus protocol \cite{ren2008}.

The system matrices $A,G \in \R^{2N\times 2N}$ in \eqref{eq:ssARZ} are given by
\[
A= I_N \otimes
\begin{bmatrix}
    1 & 0\\
    \frac{\Delta t\, v_f}{\tau} & 1-\frac{\Delta t}{\tau}
\end{bmatrix},
\quad G = I_N \otimes
\begin{bmatrix}
    \frac{\Delta t}{\Delta h} & 0\\
    0 & \frac{\Delta t}{\Delta h}
\end{bmatrix}
\]
where $\otimes$ denotes the Kronecker product, and 
\begin{equation*}
f(x_k,u_k)=
\biggl[
\begin{bmatrix}
q_{0,k}-q_{1,k}\\
\phi_{0,k}-\phi_{1,k}
\end{bmatrix}^\T,\ldots,
\begin{bmatrix}
q_{N-1,k}-q_{N,k}\\
\phi_{N-1,k}-\phi_{N,k}
\end{bmatrix}^\T
\biggr]^\T
\end{equation*}
represents net flows between adjacent cells. 
Here, $q_{i,k}$ denotes the traffic flux leaving cell $i$ and entering cell $i+1$ at time step $k$, measured in vehicles per unit time. Since we consider no on-ramps or off-ramps, the flux $q_{i,k}$ across the cell boundary is determined by the minimum of the upstream demand $D_{i,k}$ and the downstream supply $S_{i+1,k}$, i.e.,
$$
q_{i,k} = \min (D_{i,k}, S_{{i+1},k}).
$$
The variable $\phi_{i,k}$ denotes the relative flux given by
\[
\phi_{i,k} = q_{i,k} \chi_{i,k} = q_{i,k}\dfrac{\psi_{i,k}}{\rho_{i,k}}.
\]
The demand $D_{i,k}$ describes the maximum flow that can exit cell $i$ and is given by
\begin{equation*}
D_{i,k} =
\begin{cases}
    \rho_{i,k}(\chi_{i,k}-p(\rho_{i,k})), &\text{if} \ \rho_{i,k} \leq \sigma (\chi_{i,k}) \\
    \sigma(\chi_{i,k})(\chi_{i,k}-p(\sigma(\chi_{i,k}))), &\text{if} \ \rho_{i,k} > \sigma(\chi_{i,k}) 
\end{cases}    
\end{equation*}
where the critical density $\sigma(\chi_{i,k})$ is given by
\[
\sigma(\chi_{i,k}) = \rho_m\para{\frac{\chi_{i,k}}{v_f(1+\gamma)}}^{1 / \gamma}.
\]
The supply $S_{i,k}$ describes the maximum flow that can enter cell $i$ from upstream. Unlike demand, which depends only on the local state, the supply also accounts for the upstream driver characteristic $\chi_{i-1,k}$, as upstream traffic conditions constrain admissible inflow. The supply
\begin{equation*}
S_{i,k} \!=\!
\begin{cases}
    \sigma(\chi_{i,k})(\chi_{i-1,k}-p(\sigma(\chi_{i-1
,k}))), \! & \!\!\! \text{if } \rho_{i,k} \!\leq\! \sigma(\chi_{{i-1},k}) \\
    \rho_{i,k}(\chi_{{i-1},k}-p(\rho_{i,k})), \! & \!\!\! \text{if } \rho_{i,k} \!>\! \sigma(\chi_{{i-1},k}).
\end{cases}
\end{equation*}
As with demand, the critical density $\sigma$ separates the free-flow and congested regimes. When $\rho_{i,k}$ is below the upstream critical value $\sigma(\chi_{i-1,k})$, the downstream cell can accept additional inflow. When the density exceeds this threshold, congestion restricts the admissible inflow.

\section{Sensor Network Model} \label{sec:sensor-network}

Both fixed RSUs and CVs are assumed to measure the density $\rho_{i,k}$ and relative flow $\psi_{i,k}$ of the cell they occupy in the highway. Fixed RSUs always observe a predetermined segment, while CVs provide measurements only for the segment they currently occupy.
CVs measure their individual speed directly using GPS or odometry. 
They can also utilize on-board perception systems, such as camera-based vehicle counting and spacing estimation, to infer a local traffic density $\rho_{i,k}$ in cell~$i$. 
The average speed $v_{i,k}$ of cell~$i$ can be approximated by tracking and aggregating the individual speeds of all vehicles currently in that cell. 
The relative flow $\psi_{i,k}$ cannot be directly observed, but it can be reconstructed from the CVs' local measurements by substituting the measured $\rho_{i,k}$ and the approximated $v_{i,k}$ into the model: $\psi_{i,k} = \rho_{i,k} \chi_{i,k} = \rho_{i,k} (v_{i,k} + p(\rho_{i,k}))$, where $\chi_{i,k}$ is the driver characteristic in \eqref{eq:driver-characteristic} and $p(\rho_{i,k})$ is the anticipatory pressure term defined in \eqref{eq:pres_fun}.

\subsection{Measurement Equation}

The measurement equation for sensor $l\in \mathcal S$, where $ \mathcal{S}$ is the set of all sensors (both RSUs and CVs), is given by
\begin{equation}\label{eq:spatmeas}
y^l_k = C_k^l x_k + \nu^l_k
\end{equation}
where $y^l_k \in \mathbb{R}^2$ is the measurement vector collected by sensor $l$ at time $k$, $x_k \in \mathbb{R}^{2N}$ is the global traffic state vector at time $k$ evolving according to \eqref{eq:ssARZ}, $\nu^l_k \in \mathbb{R}^2$ is the measurement noise associated with sensor $l$. 
The dimension of $y^l_k$ is $\mathbb{R}^2$ because each sensor~$l$ measures the two state components, $\rho_{i_{l,k},k}$ and $\psi_{i_{l,k},k}$, associated with cell $i_{l,k}\in\{1,\dots, N\}$ where sensor~$l$ is located at time~$k$.

The measurement matrix $C_k^l \in \mathbb{R}^{2 \times 2N}$ is dynamic and sparse. 
It extracts the state vector block $x_{i_{l, k}, k}$ from $x_k$, where $i_{l,k}$ denotes the index of the specific cell currently occupied by sensor $l$ at time $k$. 
It is constructed as
$$C_k^l = e_{i_{l,k}}^\top \otimes I_2$$
where $i_{l,k} \in \{1, \dots, N\}$ is the cell index occupied by sensor $l$ at time $k$  and $e_{i_{l,k}}$ denotes the $i_{l,k}$-th canonical basis vector of $\mathbb{R}^N$.
This ensures that the measurement $y^l_k$ is directly mapped to the two-component state vector ($\rho_{i_{l,k},k}$ and $\psi_{i_{l,k},k}$) of the cell it occupies at time $k$. 
As RSUs are fixed, the corresponding $i_{l, k}=i_l$ is constant with respect to time $k$. 
For CVs, $i_{l, k}$ is updated based on the vehicle's position at time $k$. When a CV is outside the modelled segment, no such cell index is defined and we set $C_k^l = 0$.

\subsection{Communication Graph}
The communication network consists of persistent P2P links between neighboring RSUs and short-range V2X links involving nearby CVs. Specifically, each RSU is connected to its immediate RSU neighbors via a P2P network, which keeps the fixed infrastructure connected for the consensus-based information exchange even when no CVs are nearby. In addition, both CVs and RSUs establish short-range communication links with other CVs within their respective communication ranges. An illustrative example of this communication architecture is shown in Fig.~\ref{fig:scenario}.



For designing and analyzing the information flow in our DTSE algorithm, the sensor network (comprising both fixed RSUs and mobile CVs) is modeled as a dynamic undirected graph $\mathcal{G}_k = (\mathcal{S}_k, \mathcal{E}_k)$ at each time step $k$.
Here, $\mathcal{S}_k \subseteq \mathcal{S}$ is the {subset of sensor nodes active at time $k$} {(i.e., the sensors currently occupying a highway cell)}, and an edge $(l, m) \in \mathcal{E}_k$ exists if sensor nodes $l, m \in \mathcal{S}_k$ are within the physical V2X communication range of each other, establishing a bidirectional communication link for state estimate exchange.
The neighbor set of an active sensor $l \in \mathcal{S}_k$ is $\mathcal{N}_k^l = \{ m \in \mathcal{S}_k : (l,m) \in \mathcal{E}_k \}$, whereas an inactive CV $l \in \mathcal{S} \setminus \mathcal{S}_k$ does not form such edges and has $\mathcal{N}_k^l = \emptyset$.

Joint connectivity of $\mathcal{G}_k$, which is required for the convergence analysis of the proposed algorithm in Section~\ref{sec:DTSE-algorithm}, is defined in the next section.
Moreover, we adopt the simplifying technical assumption that the volume of transmitted data (the local state estimates and covariance matrices) does not exceed the capacity of any link \cite{olfati-saber2007}. 
In this paper, this assumption is justified by the relatively low instantaneous CV penetration rate expected in the considered operational scenario. 
This keeps the node density and, consequently, the number of simultaneous communication sessions low, preventing the channel congestion and packet loss typically observed in dense vehicular ad-hoc networks \cite{kenney2011, abboud2016}. 
This allows us to neglect link-saturation effects and focus solely on the connectivity topology necessary for the distributed estimation consensus.

\subsection{Observability Assumptions}
Observability of the linearized ARZ state-space model in \eqref{eq:ssARZ} from the available RSU and CV measurements is required for convergence of the algorithm in Section~\ref{sec:DTSE-algorithm}.
A time-varying observability Gramian analysis~\cite{vishnoi2024} shows that full observability requires the most downstream mainline cell to be sensed at all times, as is typical in conservation-law-based traffic models. We therefore refer to this requirement as collective observability of the linearized system. In the simulations, we assume that a fixed RSU is always present on the most downstream mainline segment, which is consistent with the boundary-RSU assumption already introduced in Section~\ref{subsec:ARZ}.
In the distributed setting, observability need not hold locally at every node because upstream sensors do not directly access downstream RSU measurements. The P2P V2X communication structure helps propagate this information across the network over successive time steps. However, when CV penetration is low or communication ranges are short, the network may be too sparse for downstream information to reach all upstream nodes reliably. Section~\ref{sec:statistical_evaluation} examines the practical impact of distributed observability under varying RSU density, CV penetration, and V2X communication range.

\section{Distributed Traffic State Estimation Algorithm and Guarantees} 
\label{sec:DTSE-algorithm}
Distributed state estimation over sensor networks is a well-established problem in control theory. Foundational work on distributed Kalman filtering introduced consensus-based schemes in which sensor nodes estimate system states by exchanging measurements and uncertainty information~\cite{olfati-saber2005, olfati-saber2007}. Subsequent studies addressed stability under communication constraints~\cite{carli2008}, diffusion-based information propagation~\cite{cattivelli2010}, convergence under weak observability~\cite{das2016}, event-triggered communication~\cite{battistelli2018}, integrated estimation and control~\cite{talebi2019}, and robustness to model uncertainty~\cite{zorzi2019}.

Most of these methods, however, are formulated for linear systems. In traffic-flow applications, the estimator must also handle nonlinear dynamics and physical constraints such as non-negativity and saturation. The algorithm developed below builds on the information-form DKF~\cite{battistelli2018} and adapts it to the nonlinear ARZ model through linearization and constrained state estimation.



\subsection{Algorithm}
Let the process noise be given by $\omega_k \sim \mathcal{N}(0,Q)$, where $Q \in \mathbb{R}^{2N \times 2N}$ is the positive definite process noise covariance matrix. Similarly, for each sensor node $l \in \mathcal{S}$, let the local measurement noise be given by $\nu_k^l \sim \mathcal{N}(0,R_k^l)$, where $R_k^l \in \mathbb{R}^{2 \times 2}$ is positive definite. Furthermore, let $\bar{x}_0 = \mathbb{E}[x_0]$ denote the prior expectation of the initial traffic state, and let $P_0 \in \mathbb{R}^{2N \times 2N}$, with $P_0 \succ 0$, denote the corresponding initial error covariance. At time $k=0$, each sensor node $l \in \mathcal{S}$ is initialized with the prior information pair
\begin{align*}
    \Xi_{0|-1}^l = (P_0)^{-1},\quad
    \hat{x}_{0|-1}^l = \Pi_{\mathcal D}(\bar{x}_0),\quad
    \xi_{0|-1}^l = \Xi_{0|-1}^l \hat{x}_{0|-1}^l,
\end{align*}
where $\Pi_{\mathcal{D}}$ is the projection operator defined later in \eqref{eq:projection}

Then, for each time $k \geq 0$ and for each sensor node $l \in \mathcal{S}$, we perform the following five steps:

\subsubsection*{Step 1 - Linearization}
We linearize \eqref{eq:ssARZ} around the current local prior estimate $\hat{x}_{k|k-1}^l = (\Xi_{k|k-1}^l)^{-1} \xi_{k|k-1}^l$ as
\begin{subequations}
\label{eq:linearized_dyn}
\begin{align}
\Lambda^l_k &= A + G \left. \frac{\partial f(x, u_k)}{\partial x} \right|_{x = \hat{x}_{k|k-1}^l}
\label{eq:linearized_dyn-Lambda} \\
\eta^l_k &= G \left( f(\hat{x}_{k|k-1}^l, u_k) - \left[ \left. \frac{\partial f(x, u_k)}{\partial x} \right|_{x = \hat{x}_{k|k-1}^l} \right] \hat{x}_{k|k-1}^l \right)
\label{eq:linearized_dyn-eta}
\end{align}
\end{subequations}
where $\Lambda^l_k$ is the linearized state transition matrix and $\eta^l_k$ is the offset vector.

\subsubsection*{Step 2 - Local Measurement Update}
Each node assimilates its own sensor data into the information space.
The local information contribution $\theta_k^l$ and associated precision matrix $\Theta_k^l$ are computed as
$$
\theta_k^l = (C_k^l)^\T (R_k^l)^{-1} y_k^l, \quad \Theta_k^l = (C_k^l)^\T (R_k^l)^{-1} C_k^l .
$$
Then, the local posterior updates are given by
\begin{equation}
    \label{eq:local-update}
    \xi_{k|k}^l = \xi_{k|k-1}^l + \theta_k^l, \quad \Xi_{k|k}^l = \Xi_{k|k-1}^l + \Theta_k^l.
\end{equation}

\subsubsection*{Step 3 - Information Fusion}
Nodes exchange their local information pairs $(\xi_{k|k}^l, \Xi_{k|k}^l)$ with neighbors $j \in \mathcal{N}^l_k$.
An iterative consensus step is performed for $L$ iterations, where $L$ is chosen according to the available communication budget and the desired number of information-exchange hops per time step.
Let $\alpha = 0, \dots, L$ be the consensus iteration index. Initialize $\xi^{l,(0)}_{k|k} = \xi^l_{k|k}$ and $\Xi^{l,(0)}_{k|k} = \Xi^l_{k|k}$.
Then, at iteration $\alpha=1, \dots, L$, compute
\begin{subequations}
\label{eq:multi-hop-consensus}
    \begin{align}
        \xi^{l,(\alpha)}_{k|k} &= \sum_{j \in \mathcal{N}^l_k \cup \{l\}} \pi_{(l,j),k}  \xi^{j,(\alpha-1)}_{k|k} \\
        \Xi^{l,(\alpha)}_{k|k} &= \sum_{j \in \mathcal{N}^l_k \cup \{l\}} \pi_{(l,j),k}  \Xi^{j,(\alpha-1)}_{k|k}
    \end{align}
\end{subequations}
where $\pi_{(l,j),k}$ are chosen as Metropolis weights. Let $d^l_k = |\mathcal{N}^l_k|$ be the degree of node $l$ in the graph $\mathcal{G}_k$. Then
\[
\pi_{(l,j),k} =
\begin{cases}
\frac{1}{1+\max\{d^l_k,d^j_k\}}, & j \in \mathcal{N}^l_k \\
1 -\sum_{i\in\mathcal{N}^l_k}\pi_{(l,i),k}, & j = l \\
0, & \text{otherwise.}
\end{cases}
\]
Since $\mathcal{G}_k$ is undirected, the resulting consensus matrix is doubly stochastic.

After $L$ iterations, the fused information variables are given by
\begin{equation}
\label{eq:fused}
\overline{\xi}^l_{k|k} = \xi^{l,(L)}_{k|k}, \quad \overline{\Xi}^l_{k|k} = \Xi^{l,(L)}_{k|k}.
\end{equation}

\subsubsection*{Step 4 - Prediction}
Using the linearized dynamics \eqref{eq:linearized_dyn} and the fused information \eqref{eq:fused}, we compute the prior for the next time step. We employ the matrix inversion lemma to propagate the information matrix without direct inversion of the state covariance, i.e.,
\begin{equation}
    \label{eq:pred-cov}
    \Xi^l_{k+1|k} = Q^{-1} - Q^{-1} \Lambda^l_k M_k^l (\Lambda^l_k)^\T Q^{-1}
\end{equation}
where $M_k^l = \left( \overline{\Xi}^l_{k|k} + (\Lambda^l_k)^\T Q^{-1} \Lambda^l_k \right)^{-1}$.
The predicted information vector accounts for $\eta^l_k$ in \eqref{eq:linearized_dyn-eta} as
\begin{equation}
    \label{eq:pred-info-vec}
    \xi^l_{k+1|k} = \Xi^l_{k+1|k} \left( \Lambda^l_k (\overline{\Xi}^l_{k|k})^{-1} \overline{\xi}^l_{k|k} + \eta^l_k \right).
\end{equation}
 
\subsubsection*{Step 5 - Enforcing Physical Constraints}
To ensure the traffic state estimate remains physically meaningful, we project the predicted state $\hat{x}^l_{k+1|k} = (\Xi^l_{k+1|k})^{-1} \xi^l_{k+1|k}$ onto the feasible state space \cite{vishnoi2024}.
For each cell $i \in \{1, \dots, N\}$, $\hat{x}^l_{i, k+1|k} = [\hat{\rho}^l_{i, k+1|k} ~ \hat{\psi}^l_{i, k+1|k}]^\T$ is projected as
\begin{subequations}
\label{eq:projection}
\begin{align}
\hat{\rho}^l_{i, k+1|k} &\leftarrow \min \Big( \max(\hat{\rho}^l_{i, k+1|k}, 0), \rho_m \Big) \\
\hat{\psi}^l_{i, k+1|k} &\leftarrow \min \Big( \max(\hat{\psi}^l_{i, k+1|k}, 0), v_f \rho_m \Big).
\end{align}
\end{subequations}
Thus, the predicted estimate is updated as
$$\hat{x}^l_{k+1|k} \leftarrow \Pi_{\mathcal{D}}((\Xi^l_{k+1|k})^{-1} \xi^l_{k+1|k})$$
where $\Pi_\mathcal{D}$ is the projection operator defined in \eqref{eq:projection}.
The information matrix $\Xi_{k+1|k}^l$ remains unchanged to preserve uncertainty characteristics, but the information vector is updated to reflect the constrained state as
$\xi^l_{k+1|k} \leftarrow \Xi^l_{k+1|k} \, \hat{x}^l_{k+1|k}.$

\subsection{Stochastic Stability and Covariance Bound}
The projection operator in \eqref{eq:projection} ensures that the state estimates remain within the physical domain $\mathcal{D} = [0, \rho_m] \times [0, v_f \rho_m]$. However, this property alone only guarantees that the estimation error cannot exceed the diameter of $\mathcal{D}$. 
A non-trivial, mathematically rigorous guarantee of convergence must be driven by the sensor network topology and measurement quality rather than mere state-space truncation. To this end, we analyze the covariance contraction induced by the distributed information fusion.

To formalize the stability of the proposed DKF for the ARZ model, we first introduce two fundamental definitions.

\begin{define}
    \label{def:joint-connectivity}
    The sensor network $\mathcal{G}_k = (\mathcal{S}_k, \mathcal{E}_k)$ is said to be \emph{jointly connected} over uniformly bounded time intervals if there exists an integer $b > 0$ such that for all $k \geq 0$, the union graph $\bigcup_{t=k}^{k+b-1} \mathcal{G}_t$ is connected.
\end{define}

Joint connectivity implies that over any time window of length $b$, information from any node can propagate to every other node in the network via the multi-hop consensus protocol \eqref{eq:multi-hop-consensus}.
Let 
\begin{subequations}
\label{eq:C_k-R_k}
    \begin{align}
        C_k &= \begin{bmatrix} (C_k^1)^\top, \ldots, (C_k^{|\mathcal{S}|})^\top\end{bmatrix}^\top  \in \mathbb{R}^{2|\mathcal{S}|\times 2N}
        \label{eq:C_k}
        \\
        R_k &= \text{diag}(R_k^1, \ldots, R_k^{|\mathcal{S}|}) \in\mathbb{R}^{2|\mathcal{S}|\times 2|\mathcal{S}|}
        \label{eq:R_k}
    \end{align}
\end{subequations}
denote the concatenated measurement matrix and block diagonal noise covariance matrix, respectively. 
Let $\Lambda_k$ be the linearized dynamics around the true state trajectory $x_k$, i.e.,
\begin{equation}
    \label{eq:Lambda_k}
    \Lambda_k = A + G \left. \frac{\partial f(x, u_k)}{\partial x} \right|_{x = x_k} \in\mathbb{R}^{2N\times 2N}.
\end{equation}

\begin{define}
    \label{def:obs}
    The global linearized system $(\Lambda_k,C_k)$ is \emph{collectively observable} if there exists an integer $N_o > 0$ and a constant $\underline{o} > 0$ such that the observability Gramian satisfies
    $$ 
    \mathcal{O}_{k, N_o} = \sum_{t=k-N_o}^{k} (\Phi_{t, k-N_o})^\top C_t^\top R_t^{-1} C_t \Phi_{t, k-N_o} \succeq \underline{o} I_{2N}
    $$
    for all $k \geq N_o$, where $\Phi_{t, k} = \Lambda_{t-1} \Lambda_{t-2} \dots \Lambda_k$ is the state transition matrix for $t > k$, and $\Phi_{k, k} = I_{2N}$.
\end{define}

In the context of the ARZ model \eqref{eq:ssARZ}, collective observability is structurally satisfied provided the most downstream mainline cell is consistently measured by an RSU. 


\begin{assume}
    \label{assume:bounded-nonlin-residual}
    The non-linear residual 
    \begin{multline*}
    \varphi(x_k, \hat{x}_k^l) = G \bigg[ f(x_k, u_k) - f(\hat{x}_k^l, u_k) \\ - \left. \frac{\partial f}{\partial x} \right|_{\hat{x}_k^l} (x_k - \hat{x}_k^l) \bigg]
    \end{multline*}
    of the ARZ dynamics \eqref{eq:ssARZ} is bounded such that, for some $\varrho > 0$,
    \begin{equation}
        \label{eq:lin-remainder-bound}
        \|\varphi(x_k, \hat{x}_k^l)\| \leq \varrho \|x_k - \hat{x}_k^l\|^2.
    \end{equation}
\end{assume}

This assumption follows directly from Taylor's theorem. The residual $\varphi$ constitutes the remainder of the first-order Taylor expansion of the non-linear traffic dynamics around the estimate $\hat{x}_k^l$. Because the flux and pressure functions of the ARZ model are twice continuously differentiable over the physical domain $\mathcal{D}$, their second derivatives (Hessian matrices) are uniformly bounded. Therefore, \eqref{eq:lin-remainder-bound} holds, where the constant $\varrho > 0$ is proportional to the supremum of the Hessian norm over $\mathcal{D}$.

For the theoretical stability analysis, we evaluate the network over a persistently active subset of sensor nodes, $\mathcal{S}^* \subseteq \mathcal{S}$ (e.g., the RSUs and a fleet of CVs currently traversing the segment). Definitions~\ref{def:joint-connectivity} and \ref{def:obs}, as well as the bounds in the theorem below, apply to this fixed-dimension sub-network.

    
\begin{thm}
\label{thm:error-guarantee}
Suppose the global linearized system $(\Lambda_k, C_k)$ is collectively observable, the communication graph $\mathcal{G}_k$ is jointly connected over uniformly bounded intervals, and Assumption~\ref{assume:bounded-nonlin-residual} holds.
Then, the fused information matrices are uniformly bounded, satisfying $0 < \underline{\sigma} I_{2N} \preceq \overline{\Xi}_{k|k}^l \preceq \overline{\sigma} I_{2N}$ for all $l \in \mathcal{S}$ and $k \geq 0$. Furthermore, provided the initial estimation errors and noise covariances are sufficiently small, there exist constants $c_1 > 0$, $\lambda \in (0, 1)$, and $M > 0$ such that the expected mean-square error for any node $l \in \mathcal{S}$ satisfies
\begin{equation}
    \label{eq:error-guarantee}
    \mathbb{E}[\|x_k - \hat{x}_k^l\|^2] \leq c_1 \left( \sum_{j \in \mathcal{S}} \|x_0 - \hat{x}_0^j\|^2 \right) \lambda^k + M, \quad \forall k \geq 0
\end{equation}
where the asymptotic bound $M$ only depends on noise and is independent of the maximum density $\rho_m$.
\end{thm}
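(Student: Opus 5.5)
The plan has two parts: a deterministic bound on the fused information matrices, then a stochastic Lyapunov argument for the error in the spirit of the consensus-on-information DKF analysis of \cite{battistelli2018} and the EKF stability results for nonlinear residuals.

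\textbf{Part 1: bounds on the information matrices.} Stack the node quantities as $\boldsymbol{\Xi}_{k} = \operatorname{diag}(\Xi^1_{k},\dots,\Xi^{|\mathcal{S}^*|}_{k})$. Let $\Pi_k^L$ be the $L$-fold product of the doubly stochastic Metropolis matrices.

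For the \emph{upper bound}, use the prediction step \eqref{eq:pred-cov}. Since $M_k^l \succ 0$, it gives $\Xi^l_{k+1|k} \preceq Q^{-1}$. The local update then adds at most $\Theta_k^l \preceq \max_l \|(R_k^l)^{-1}\| I_{2N}$. Consensus is a convex combination of matrices, so it preserves the bound. Hence
\[
\overline{\Xi}^l_{k|k} \preceq \overline{\sigma} I_{2N}, \qquad \overline{\sigma} = \|Q^{-1}\| + \max_{l,k}\|(R_k^l)^{-1}\| .
\]

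The \emph{lower bound} is the substantive part. Unroll the recursion over a window of length $\bar N = \max(N_o,b)+\text{(diameter)}$. Use $\Xi_{k+1|k} = (\Lambda \overline{\Xi}^{-1}\Lambda^\top + Q)^{-1}$ together with the monotonicity of this map. This shows that $\overline{\Xi}^l_{k|k}$ dominates
\[
\sum_{t} \sum_j [\Pi]_{lj}\, \Phi^{-\top}(\cdot)\,\Theta_t^j\,\Phi^{-1}(\cdot) + (\text{a } Q\text{-dependent term}).
\]
Joint connectivity over intervals of length $b$ ensures that every entry of the product of $\Pi$'s over $\bar N$ steps is bounded below by some $\pi_{\min}^{\bar N}>0$. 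Collective observability (Definition~\ref{def:obs}) then gives $\sum_j \Theta$-terms $\succeq \underline{o}\,I$ up to that factor. Two further facts are needed here: $\Lambda_k$ is invertible and uniformly bounded, since $1-\Delta t/\tau\neq 0$, the matrices $\partial f/\partial x$ are bounded on $\mathcal{D}$, and $\Delta t$ is small (CFL). The linearization points $\hat x^l$ and $x_k$ differ, and the difference is absorbed by continuity of $\partial f/\partial x$ on $\mathcal{D}$. This yields $\underline{\sigma}>0$. For $k<\bar N$, use $P_0\succ 0$.

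\textbf{Part 2: the error Lyapunov function.} Define the errors $e^l_{k} = x_k-\hat x^l_{k|k-1}$ and the candidate
\[
V_k=\sum_l e_k^{l\top}\Xi^l_{k|k-1}e^l_k ,
\]
which Part 1 sandwiches between multiples of $\sum_l\|e^l_k\|^2$. The error evolves as
\[
e^l_{k+1} = \Lambda^l_k(\text{fused error}) + \varphi + \omega_k - (\text{measurement noise terms}),
\]
with the projection treated as nonexpansive: $\Pi_\mathcal{D}$ is a componentwise clip onto a box containing $x_{k+1}$, so it can only decrease $\|e\|$, although in the $\Xi$-weighted norm this needs care. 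I would instead apply the Lyapunov function to the pre-projection quantity and bound the projection effect using the box structure and the equivalence of norms.

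The standard information-filter identity combined with the convexity of the map $(\xi,\Xi)\mapsto \Xi^{-1}\xi$ under consensus (Battistelli--Chisci) gives
\[
\mathbb{E}[V_{k+1}\mid\mathcal F_k]\le (1-\beta)V_k + \kappa_1 \sum_l\|\varphi^l\|\,\|e^l\| + \kappa_2,
\]
where $\beta\in(0,1)$ depends only on $\underline\sigma,\overline\sigma$ and the bound on $\|\Lambda\|$, and $\kappa_2$ collects $\operatorname{tr}Q$ and the $\operatorname{tr}R$ terms. By Assumption~\ref{assume:bounded-nonlin-residual}, the residual term is at most $\kappa_1\varrho\|e\|^3$.

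\textbf{Main obstacle.} The hard step is this cubic term. Following Reif et al.'s stochastic EKF lemma, I would restrict to $\|e\|\le\epsilon$ with $\kappa_1\varrho\epsilon\le\beta/2$, which gives $\mathbb{E}[V_{k+1}]\le(1-\beta/2)V_k+\kappa_2$ on that region. The smallness hypotheses on the initial errors and noise covariances are exactly what keeps the trajectory in this region: either by the bounded-in-probability argument of Reif et al., or using the uniform boundedness guaranteed by the box $\mathcal{D}$ to control the exit event. Iterating this inequality gives $\mathbb{E}V_k\le(1-\beta/2)^kV_0+2\kappa_2/\beta$. Converting back through the sandwich bounds yields \eqref{eq:error-guarantee} with $\lambda=1-\beta/2$, $c_1=\overline\sigma/\underline\sigma$, and $M=2\kappa_2/(\beta\underline\sigma)$. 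Since $M$ is built only from $Q$, $R$, and the spectral constants, it does not involve the diameter $\rho_m$ of $\mathcal{D}$.
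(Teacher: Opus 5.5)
Your two-part architecture is the paper's. First, lower-bound $\overline{\Xi}^l_{k|k}$ by unrolling $\Xi^l_{k+1|k}=(\Lambda\overline{\Xi}^{-1}\Lambda^\top+Q)^{-1}$ over a connectivity window and invoking the Gramian. Then push an information-weighted quadratic Lyapunov function through consensus (the convexity inequality for doubly stochastic weights), the $Q$-driven prediction contraction, and local absorption of the $\varrho$-residual. The paper weights the fused posterior errors $\overline{e}^l_{k|k}$ where you weight prior errors, which is immaterial. Your Part~1 is better ordered than the paper's:
\begin{itemize}
\item The paper's step $Q\preceq c\,\Lambda\overline{\Xi}^{-1}\Lambda^\top$ needs $\overline{\sigma}$ before that bound is proved. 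Your $\Xi^l_{k+1|k}\preceq Q^{-1}$ supplies it up front (include $\|P_0^{-1}\|$ to cover $k=0$).
\item The extra slack in your window is genuinely needed. The paper's claim $[\mathcal{W}_{t,k}]_{lj}\ge\underline{w}$ for all $t\in[k-K,k]$ fails for $t$ near $k$. The slack should be of order $(|\mathcal{S}^*|-1)b$ steps.
\end{itemize}
You and the paper both gloss over the circularity in replacing $\Lambda^l_k$ by the true-trajectory Jacobian. That replacement presupposes the small errors that are only obtained in Part~2.

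The step that fails is your treatment of the projection. The clip gives $e=T\check{e}$ with $T$ diagonal and $0\preceq T\preceq I$. Norm equivalence then only yields $e^\top\Xi e\le\kappa_\Xi\,\check{e}^\top\Xi\check{e}$, with $\kappa_\Xi\ge 1$ a bound on the condition number of $\Xi^l_{k+1|k}$. Your one-step factor therefore becomes $\kappa_\Xi(1-\beta)$, and nothing in the hypotheses makes that smaller than one.

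The paper does not close this step either. It asserts that, by diagonal dominance, the clip preserves the weighted norm, and that is false. Take $\Xi=\begin{bmatrix}1&1/2\\1/2&1\end{bmatrix}$, with the true second component just inside a face and its estimate beyond it. Clipping can turn $\check{e}=(1,-\tfrac12)^\top$ into $e=(1,-\tfrac{1}{10})^\top$, which raises the quadratic form from $0.75$ to $0.91$.

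To make the step valid you need one of two changes:
\begin{itemize}
\item Project in the $\Xi^l_{k+1|k}$-norm. Metric projection onto the convex set $\mathcal{D}\ni x_{k+1}$ is nonexpansive in its own norm.
\item Assume $x_k$ stays at distance at least $\epsilon$ from $\partial\mathcal{D}$. Then the clip is inactive inside your local $\epsilon$-region and $e=\check{e}$.
\end{itemize}

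A related problem concerns your two ways of handling exits from the $\epsilon$-ball. The one based on boundedness of $\mathcal{D}$ gives $\mathbb{E}\|e\|^3\le\epsilon\,\mathbb{E}\|e\|^2+\operatorname{diam}(\mathcal{D})^3\Pr(\|e\|>\epsilon)$. That brings $\rho_m$ and $v_f\rho_m$ into the constants and contradicts your closing claim that $M$ is independent of $\rho_m$. Only a genuinely local argument keeps $\rho_m$ out, such as the paper's (equally informal) region-of-attraction absorption.
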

\begin{proof}
    See Appendix~\ref{appendix:proof-error-guarantee}.
\end{proof}

Theorem~\ref{thm:error-guarantee} shows that the proposed DTSE algorithm is stochastically bounded. Its mean-square error has two parts: an exponentially decaying transient with rate $\lambda$ and a steady-state noise floor $M$. Hence, the estimator tracks the macroscopic traffic state without diverging and ultimately remains in a bounded uncertainty region determined by sensing quality and network connectivity. Reducing this bound requires either lower sensor variance, which decreases $c'$, or stronger V2X connectivity, which improves $\lambda$ and $\underline{\sigma}$.

The small-error and bounded-noise conditions are standard local assumptions for extended Kalman filtering. Because the ARZ model is nonlinear, DTSE uses a local Taylor expansion; the initial-error condition defines a region of attraction in which the linear contraction dominates the nonlinear remainder. The bounded-noise condition ensures that routine disturbances from driver behavior or sensor imperfections do not push the estimate out of this neighborhood. Within this region, the estimator remains stable and converges over time.


\section{Experimental Setup}\label{sec:experimental_setup}
The proposed DTSE method is evaluated on three highway scenarios. The HighD and NGSIM datasets provide empirical trajectory data under realistic highway traffic conditions and are used to evaluate reconstruction of density and velocity fields. The SUMO scenario reproduces the motivating case illustrated in Fig.~\ref{fig:scenario}: a downstream bottleneck generates a backward-propagating shockwave while the ego-CV enters from upstream. This larger and controllable highway domain allows evaluation of ego-CV estimation beyond the vehicle's onboard sensing range.

Together, these scenarios support both qualitative and statistical evaluation. First, representative node-level estimates are analyzed to assess whether the DTSE framework reconstructs the dominant spatiotemporal traffic dynamics at fixed RSU and mobile ego-CV nodes. Second, network-level performance is evaluated by varying CV penetration rate, V2X communication range, and RSU deployment density, thereby quantifying how sensing availability and communication connectivity affect estimation accuracy.

\subsection{Scenarios and Ground-Truth Data}\label{subsec:data}
Here we provide details on the two empirical datasets and
the simulated SUMO scenario used in our experiments.

\begin{itemize}
    \item HighD. Track 25 from Location 1 records a highway near Cologne with three lanes in each direction. The three lanes carrying right-to-left traffic are taken over the full 400~m segment, from 8:55-9:07~am.
    \item NGSIM (US-101). The recording covers a 550~m, five-lane segment, including one auxiliary lane for on- and off-ramps whose flows were negligible during the selected period. The data span 8:07-8:19~am and are restricted to the 100-500~m range, giving a 400~m segment comparable in length to HighD.
    \item SUMO. A two-lane, $2.7$~km highway in which the first and last $100$~m serve as buffer zones for extracting upstream demand and downstream density boundary conditions\footnote{For the empirical scenarios, the first and last cells serve as boundary cells to maximize the analyzed highway length.}, yielding an effective study domain of $2.5$~km. The simulation runs for $1200$~s with a speed limit of $100$~km/h. Traffic demand at the upstream boundary is $3600$~veh/h ($1$~s average interarrival time), with vehicles inserted via SUMO's best-lane departure rule at maximum speed. Vehicle dynamics follow the standard Krauss car-following model with a $1.5$~m minimum gap. To trigger transient, non-equilibrium conditions, a temporary bottleneck is induced at $d = 2200$~m by lowering the speed limit from $100$~km/h to $10$~km/h during $t \in [700,760]$~s, producing a stop-and-go shock wave that propagates upstream.
\end{itemize}

For all three scenarios, the raw trajectory data are aggregated into an Eulerian traffic state using Edie's generalized definition on space-time cells. The HighD and NGSIM segments are discretized with $t_e = 2$~s and $\Delta h = 40$~m, yielding $N = 10$ cells over a $720$~s observation window, while the SUMO scenario uses $t_e = 5$~s and $\Delta h = 100$~m, yielding $N = 25$ cells over the $1200$~s horizon. For each cell $(i,k)$, the total travel time $T_{i,k}$ and traveled distance $H_{i,k}$ of all assigned trajectory segments are accumulated, and the density and speed are computed as $\rho_{i,k} = T_{i,k}/(\Delta h\,t_e)$ and $v_{i,k} = H_{i,k}/T_{i,k}$. The auxiliary relative-flow state is then $\psi_{i,k} = \rho_{i,k}\chi_{i,k}$, with $\chi_{i,k}$ as in \eqref{eq:driver-characteristic} and $p(\rho)$ as in \eqref{eq:pres_fun}; while $\rho_{i,k}$ and $v_{i,k}$ follow directly from the Edie aggregation, $\psi_{i,k}$ additionally requires the calibrated FD parameters $v_f$, $\rho_m$, and $\gamma$. 

The ground-truth state for each scenario is shown in the left column of Figs.~\ref{fig:highd_rsu1}, \ref{fig:ngsim_rsu1}, and \ref{fig:SUMO_egoCV}. The HighD and NGSIM both capture congested-flow conditions with multiple backward-propagating stop-and-go waves, with NGSIM exhibiting notably higher densities (up to $500$~veh/km versus $260$~veh/km in HighD) and more pronounced congestion. The SUMO
scenario, by contrast, evolves from largely free-flow conditions to a sharp, accident-induced bottleneck that triggers a single backward shockwave.

\subsection{Parameter Calibration}\label{subsec:calibration}

The ARZ model parameters are calibrated against the ground-truth traffic state following~\cite{seo2017}. The FD parameters are obtained by nonlinear least-squares regression of the equilibrium relation in \eqref{eq:equilibrium-v-rho} onto nearly-stationary traffic states, defined as the mean speed and density of each cell whose within-cell speed coefficient of variation is below a threshold. This threshold is $25\%$ for HighD and NGSIM, and $50\%$ for SUMO to retain enough samples from its accident-induced congested regime.


The resulting estimates for the HighD scenario are $v_f = 50.90$ km/h, $\rho_m = 237.15$ veh/km, and $\gamma = 1.8564$. For the NGSIM scenario, the estimates are $v_f = 54.03$ km/h, $\rho_m = 416.14$ veh/km, and $\gamma = 2.2586$. For the SUMO scenario, the estimates are $v_f = 95.31$ km/h, $\rho_m = 232.56$ veh/km, and $\gamma = 1.1882$. The relaxation time is set to $\tau = 40$ s for NGSIM following the calibration in~\cite{belletti2015}, and likewise to $\tau = 40$ s for HighD. For SUMO we use $\tau = 20$ s. Numerical stability of the discretization is ensured by verifying the Courant-Friedrichs-Lewy (CFL) condition~\cite{courant1967}, $\frac{v_f\,\Delta t}{\Delta h} < 1,$ which is satisfied for all three scenarios. HighD and NGSIM use $\Delta t = 1$ s and $\Delta h = 40$ m, while SUMO uses $\Delta t = 1$ s and $\Delta h = 100$ m.


\subsection{Sensing and V2X Network Configuration}\label{subsec:sensors}

The sensing and communication setup specifies the placement of the RSUs, the CV penetration rate, and the V2X and P2P links that connect them. As discussed in Section~\ref{sec:sensor-network}, RSUs maintain persistent P2P links with their immediate neighbors, independent of the V2X channel, while all RSU-CV and CV-CV communication is governed by a finite V2X communication range. Empirical scenarios use a denser infrastructure over a short highway segment, whereas the SUMO scenario uses a sparser deployment over a much longer domain to accommodate the ego-CV setting.

For the HighD and NGSIM scenarios, the fixed infrastructure consists of two RSUs at $d \in \{20, 380\}$~m with a V2X communication range of $300$~m, consistent with ranges reported in real-world highway V2X evaluations~\cite{maglogiannis2022}. The CV penetration rate is set to approximately $10\%$, yielding $77$ of $774$ vehicles ($9.95\%$) for HighD and $167$ of $1668$ ($10.01\%$) for NGSIM. In both scenarios, the upstream RSU at $20$~m serves as the representative estimator for the analysis.

The SUMO scenario covers a substantially larger domain and therefore uses four RSUs positioned at $d \in \{50, 850, 1650, 2450\}$~m, giving an inter-RSU spacing of $800$~m, larger than in the empirical setups. The V2X communication range is set to $400$~m, deliberately shorter than the inter-RSU spacing, so any state information an RSU contributes to its peers can only reach intermediate vehicles through CV-mediated relays. This design stresses the consensus protocol's ability to fuse information across the mobile fleet. The CV penetration rate is again set to approximately $10\%$.
The analysis focuses on the time window $t \in [700, 845]$~s, defined by the traversal of a reference ego-CV that enters the network as the congestion wave is triggered. During this interval, $248$ vehicles traverse the highway segment, of which $25$ are designated as CVs ($10.08\%$ penetration).


\subsection{DTSE Filter Tuning and Initialization}\label{subsec:filter}
The DTSE filter is initialized uniformly across all sensor nodes $l \in \mathcal{S}$. At $k=0$, the prior mean is chosen as $\bar{x}_0 = \mathbb{E}[x_0] = x_0$, with initial error covariance $P_0 = 10^{-3} I_{2N}$. This provides a common reference across configurations, so the reported results reflect estimator performance in a post-transient regime rather than an arbitrary initial mismatch, whose influence is confined to a short start-up phase.
The measurement noise covariance is defined, for all $l \in \mathcal{S}$, as $R^{l}=\operatorname{diag}\!\left((\beta\rho_{\max})^{2},(\beta\psi_{\max})^{2}\right)$, with $\beta=0.01$. To account for ARZ model mismatch and discretization errors, the process noise covariance is set to $Q=I_N\otimes \operatorname{diag}\!\left((\kappa\rho_{\max})^{2},(\kappa\psi_{\max})^{2}\right)$, with $\kappa=0.01$. To ensure physical realizability, the state estimates are projected onto the box constraints $\rho\in[0,\rho_m]$ veh/km and $\psi\in[0,v_f\rho_m]$ veh/h. Finally, the consensus protocol uses $L=5$ communication rounds per time step to accelerate information fusion across the sensor nodes.


\section{Estimation Results}\label{sec:estimation_results}

We now assess the qualitative reconstruction quality of the proposed DTSE algorithm under the baseline configuration described in
Section~\ref{sec:experimental_setup}. For each scenario, we examine the spatiotemporal density and speed fields maintained by the
representative upstream RSU and ego-CV node.

\subsection{RSU State Estimation}
\label{subsec:results_rsu}

Figures~\ref{fig:highd_rsu1} and~\ref{fig:ngsim_rsu1} compare the ground-truth traffic fields with the DTSE reconstructions for the HighD and NGSIM datasets, respectively. In each figure, the left column shows the ground-truth density $\rho(t,d)$ and speed $v(t,d)$, derived from Edie's generalizations, and the right column shows the corresponding RSU estimates $\hat{\rho}(t,d)$ and $\hat{v}(t,d)$. 

Despite the sparse $10\%$ CV penetration rate, the DTSE method recovers the dominant wave locations, propagation slopes, and recurrence patterns in both datasets. Because HighD carries less traffic over the same window, its $77$ CVs (of $774$ vehicles) provide substantially fewer absolute measurements than NGSIM's $167$ (of $1668$). The estimated fields still preserve the alternating high-density/low-speed and low-density/high-speed regions, showing that the RSU reconstructs traffic  dynamics beyond its own sensing location through the distributed consensus network.

Across both datasets, the reconstructed fields are smoother than the ground truth and omit some fine-scale variations, as expected from the model-based estimator and the projection onto the admissible state space,
which attenuates local densities outside the feasible model range. 

\begin{figure}[!t]
    \centering
    \includegraphics[width=\linewidth]{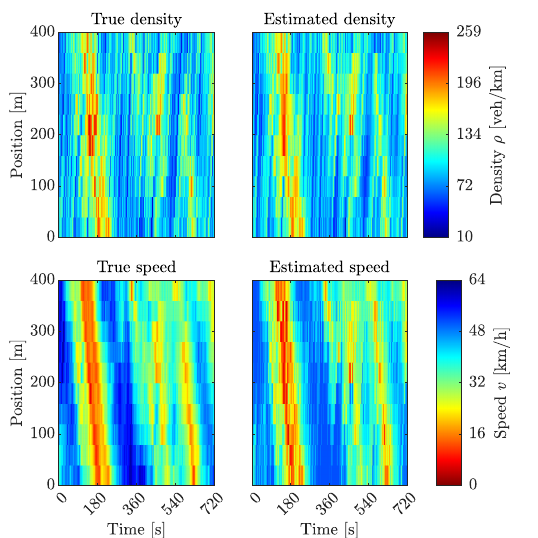}
    \caption{Spatiotemporal reconstruction of highway density and speed by the first RSU. The left column shows the HighD density and speed fields, and the right column shows the corresponding RSU estimate.}
    \label{fig:highd_rsu1}
\end{figure}

\begin{figure}[!t]
    \centering
    \includegraphics[width=\linewidth]{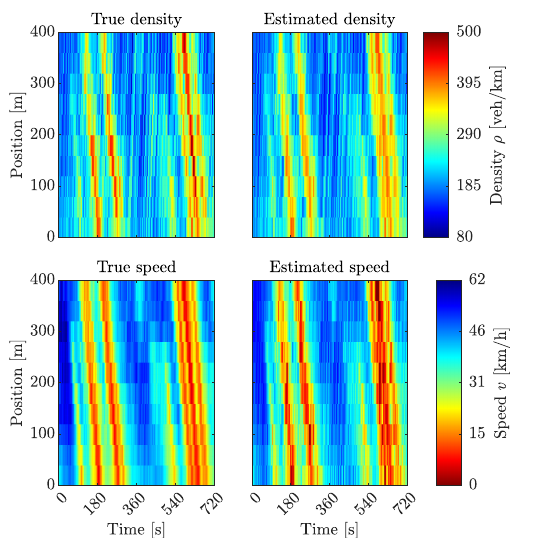}
    \caption{Spatiotemporal reconstruction of highway density and speed by the first RSU. The left column shows the NGSIM density and speed field, and the right column shows the RSU estimate.}
    \label{fig:ngsim_rsu1}
\end{figure}

\begin{figure}[!t]
    \centering
    \includegraphics[width=\linewidth]{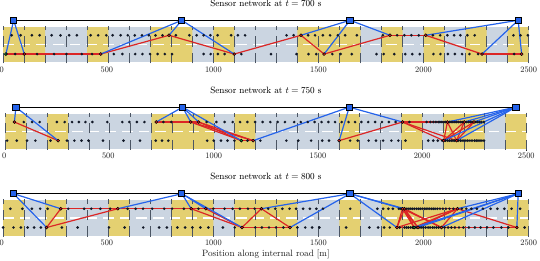}
\caption{Vehicle-sensor communication network at $t = 700$, $750$, and $800$~s: RSUs (blue squares), CVs (red dots), conventional vehicles (black dots), the ego-CV (green dot), and local sensing ranges (yellow). Links
denote P2P (black), V2V (red), and V2I (blue) communication. The ego-CV's accessible information changes as traffic evolves and links form and break.}
    \label{fig:SUMO_network}
\end{figure}

\begin{figure}[!t]
    \centering
    \includegraphics[width=\linewidth]{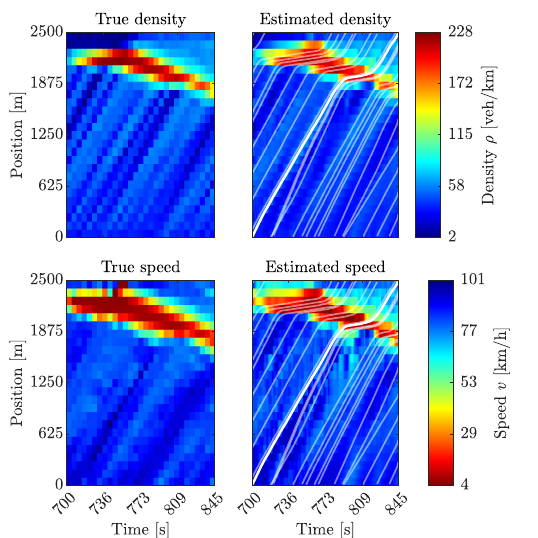}
    \caption{Spatiotemporal reconstruction of highway density and speed by the ego-CV. The left column shows the SUMO density and speed field, and the right column shows the ego-CV estimate. The bold white line is the trajectory of the ego-CV and the other white lines are the trajectories of the CV fleet.} 
    \label{fig:SUMO_egoCV}
\end{figure}

\subsection{Ego-CV State Estimation}\label{subsec:results_egocv}
We evaluate the SUMO scenario by examining both the evolving sensing and communication graph the ego-CV operates within and the resulting spatiotemporal reconstruction it maintains onboard.

Figure~\ref{fig:SUMO_network} provides a sequence of snapshots illustrating the dynamic evolution of the sensing and communication graph $\mathcal{G}_k$. The distinct node types are color-coded: fixed RSUs (blue squares), participating CVs (red dots), and non-connected vehicles (black dots). The ego-CV, whose internal estimation belief is the subject of this analysis, is highlighted in green. The yellow shaded regions delineate the instantaneous observable subspace of the network, i.e., areas where density and flow are directly measurable by at least one sensor node. As evidenced by the gaps between yellow regions, the network is characterized by sparse and fragmented observability, and accurate estimation therefore relies heavily on the distributed consensus mechanism to diffuse information from observed segments to unobserved ones via V2X communication links.

A comparative analysis of the ground-truth traffic dynamics and the distributed reconstruction is presented in Fig.~\ref{fig:SUMO_egoCV}. The left panel visualizes the macroscopic density field $\rho(t,d)$ and speed field $v(t,d)$ serving as the ground truth, while the right panel displays the estimated fields $\hat{\rho}(t,d)$ and $\hat{v}(t,d)$ as reconstructed by the ego-CV, with the ego-CV trajectory overlaid for context. The imposition of the speed limit drop at $t = 700$~s triggers a rapid transition from free-flow to congested conditions, producing a backward-propagating shockwave (the high-density region slanting upwards to the left). Despite the ego-CV being located far upstream of the bottleneck when the congestion initiates, its local estimate captures the onset of the congestion wave almost simultaneously with the ground truth, indicating effective information dissemination from the RSUs and CVs through the multi-hop consensus network. The estimator further reproduces the propagation speed of the shockwave front and the subsequent dissipation of the jam after the speed limit is restored at $t = 760$~s. The diffusion-based consensus mechanism and the spatial discretization introduce slight smoothing relative to the ground truth, but the estimated fields preserve the main spatiotemporal extent and propagation of the high-density region. This confirms that the proposed nonlinear information filter enables mobile CV nodes to maintain traffic-state awareness beyond their onboard sensing range, even when the network topology is dynamic and direct observation coverage is partial.


   \begin{figure}[!t]
    \centering
    \includegraphics[width=\linewidth]{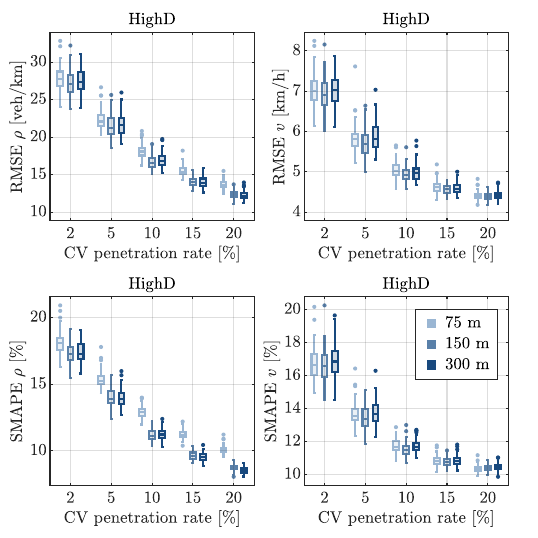}
    \caption{Estimation accuracy of the fused network estimate in SMAPE and RMSE for the HighD scenario across CV penetration rates and V2X communication ranges, over 100 trials.}
    \label{fig:highd_boxplot}
\end{figure}

\begin{figure}[!t]
    \centering
    \includegraphics[width=\linewidth]{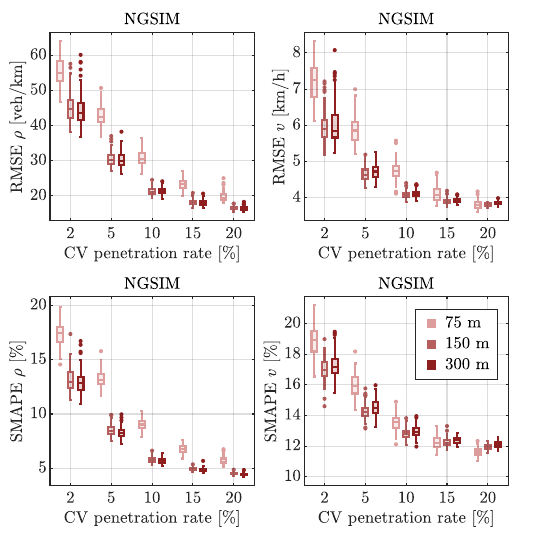}
    \caption{Estimation accuracy of the fused network estimate in SMAPE and RMSE for the NGSIM scenario across CV penetration rates and V2X communication ranges, over 100 trials.}
    \label{fig:ngsim_boxplot}
\end{figure}

\section{Statistical Evaluation}\label{sec:statistical_evaluation}

To complement the previous case studies, we evaluate how CV penetration rate, RSU deployment density, and V2X communication range affect estimation accuracy. Since the communication graph $\mathcal{G}_k$ depends on the random spatial distribution of participating CVs, we use Monte Carlo trials to characterize performance across different CV selections. 

We conduct two experiments. The first varies the CV penetration rate and the V2X communication range while keeping the RSU deployment fixed at the baseline configuration. The tested penetration rates are
$p \in \{2,5,10,15,20\}\%$. The tested ranges are
$r \in \{75,150,300\}$~m for the $400$~m HighD and NGSIM highway segments, and $r \in \{300,400,500\}$~m for the $2.5$~km SUMO scenario.

The second experiment varies RSU deployment density $d_0,\dots,d_5$ and CV penetration rate $p \in \{0,2,5,10,15,20\}\%$ at the baseline V2X communication
range ($300$~m for HighD and NGSIM, and $400$~m for SUMO). The RSU deployment configurations $d_0,\dots,d_5$ are listed in Table~\ref{tab:rsu_deployment}. In Fig.~\ref{fig:RSUvsCV}, the row $d_0$ corresponds to CV-only configurations, the column $0\%$ to RSU-only configurations, and the remaining cells to combined RSU and CV
deployments.


\begin{table}
\centering
\caption{RSU deployment positions in meters.}
\label{tab:rsu_deployment}
\setlength{\tabcolsep}{4pt}
\renewcommand{\arraystretch}{0.95}
\begin{tabular}{cll}
\toprule
& HighD / NGSIM & SUMO \\
\midrule
$d_0$ & no RSUs & no RSUs \\
$d_1$ & 380 & 2450 \\
$d_2$ & 20, 380 & 50, 2450 \\
$d_3$ & 20, 220, 380 & 50, 1250, 2450 \\
$d_4$ & 20, 140, 260, 380 & 50, 850, 1650, 2450 \\
$d_5$ & 20, 100, 220, 300, 380 & 50, 650, 1250, 1850, 2450 \\
\bottomrule
\end{tabular}
\end{table}

For each parameter combination, we conduct $100$ independent trials with different random CV subsets. For HighD and NGSIM, CVs are drawn uniformly from the vehicles active during the observation window ($774$ vehicles for HighD and $1668$ for NGSIM). For SUMO, the ego-CV is retained in all CV-based trials, and the remaining CVs are drawn uniformly from the other $248$ vehicles active during its traversal window. In the $0\%$ penetration case, no vehicles are designated as CVs, giving an RSU-only baseline. Since there is no CV-sampling randomness, this case is evaluated in a single run.

Performance is evaluated using the fused network estimate, obtained by averaging the local estimates of the active sensor nodes at each time step
$
\bar{x}_{k,j} =
\frac{1}{|\mathcal{S}_{k,j}|}
\sum_{l \in \mathcal{S}_{k,j}} \hat{x}^{\,l}_{k,j},
$
where $\mathcal{S}_{k,j}$ is the set of active sensor nodes in trial $j$ at
time $k$. This fused estimate measures the collective accuracy of the
distributed network rather than the estimate at a single node.

For each trial $j$ and state variable $z \in \{\rho,v\}$, we first compute
the Root Mean Square Error (RMSE) as
\[
\mathrm{RMSE}_j(z) =
\sqrt{
\frac{1}{N_\mathrm{steps}N}
\sum_{k=1}^{N_\mathrm{steps}}
\|z_{k}-\bar{z}_{k,j}\|^2}
\]
and then the Symmetric Mean
Absolute Percentage Error (SMAPE) as
\[
\mathrm{SMAPE}_j(z) =
\frac{100}{N_\mathrm{steps}N}
\sum_{k=1}^{N_\mathrm{steps}}
\sum_{i=1}^{N}
\frac{2|z_{i,k}-\bar{z}_{i,k,j}|}
{|z_{i,k}|+|\bar{z}_{i,k,j}|}.
\]
The RMSE measures the absolute error in physical units, while the SMAPE measures relative error on a bounded, unit-independent scale.

For the first experiment, we report the full trial distributions of $\mathrm{RMSE}_j$ and $\mathrm{SMAPE}_j$ using boxplots. For the second, each configuration is summarized by the average of the median
$\mathrm{SMAPE}_j(\rho)$ and median $\mathrm{SMAPE}_j(v)$ over the $100$
trials.

\begin{figure}
    \centering
    \includegraphics[width=\linewidth]{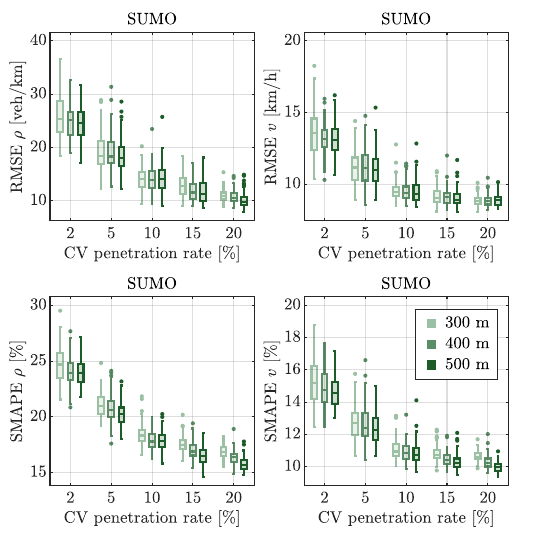}
    \caption{Estimation accuracy of the fused network estimate in SMAPE and RMSE for the SUMO scenario across CV penetration rates and V2X communication ranges, over 100 trials.}
    \label{fig:sumo_boxplot}
\end{figure}

\begin{figure}
    \centering
    \includegraphics[width=\linewidth]{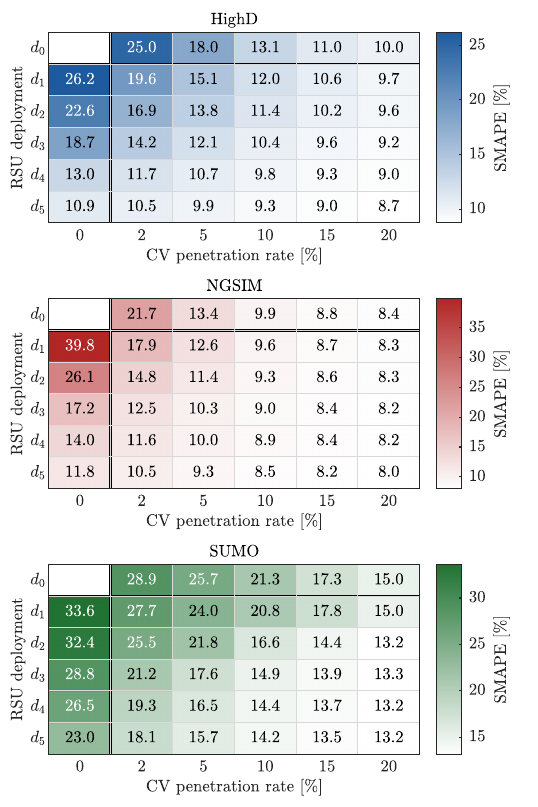}
\caption{Average median $\mathrm{SMAPE}(\rho)$ and $\mathrm{SMAPE}(v)$ of the fused network estimate over 100 trials for varying CV penetration rates (columns) and RSU deployments $d_0,\dots,d_5$ (rows) in HighD, NGSIM, and SUMO. The row $d_0$ corresponds to CV-only configurations, the column $0\%$ to RSU-only configurations, and the remaining cells to combined RSU and CV deployments.}
    \label{fig:RSUvsCV}
\end{figure}

\subsection{HighD and NGSIM: CV Penetration and V2X Range}\label{subsec:stats_empirical}
Figures~\ref{fig:highd_boxplot} and~\ref{fig:ngsim_boxplot} show the HighD
and NGSIM error distributions for varying V2X range
$r \in \{75,150,300\}$~m and CV penetration rate, using the baseline RSU deployment $d_2$.
For both datasets, the median error decreases and the interquartile range narrows as the CV penetration rate increases, confirming that a denser CV fleet improves both the accuracy and robustness of the DTSE estimates for density and speed alike. At the baseline range $r = 300$~m, as the penetration rate increases from $2\%$ to $20\%$, $\mathrm{SMAPE}(\rho)$ decreases from approximately $17\%$ to $8.5\%$ for HighD and from $13\%$ to $4.5\%$ for NGSIM, while $\mathrm{SMAPE}(v)$ decreases from approximately $17\%$ to $10\%$ for HighD and from $17\%$ to $12\%$ for NGSIM. $\mathrm{RMSE}(v)$ decreases to approximately $4.4$~km/h for HighD and
$3.9$~km/h for NGSIM at the highest penetration rate. At the baseline penetration rate of $10\%$, consistent with Section~\ref{sec:estimation_results}, $\mathrm{SMAPE}(\rho)$ is approximately $11\%$ for HighD and $6\%$ for NGSIM, while $\mathrm{SMAPE}(v)$ is approximately $12\%$ for HighD and $13\%$ for NGSIM.

The effect of V2X communication range is more modest than that of CV penetration for HighD, where the $150$ and $300$~m ranges produce nearly indistinguishable error distributions and only the shortest $75$~m range
yields slightly higher density errors. This reflects a saturation pattern: since the two RSUs remain connected through persistent P2P links, range mainly determines how effectively mobile CV measurements reach the consensus network, with shorter ranges giving weaker, more intermittent connectivity and diminishing returns beyond a moderate range. NGSIM deviates from this pattern at short range, where errors are markedly worse across nearly all penetration rates and metrics (e.g., $\mathrm{SMAPE}(\rho)$ at $p = 2\%$ is roughly $40\%$ higher). We attribute this to NGSIM's more severe, spatially uneven congestion (Fig.~\ref{fig:ngsim_rsu1}). Sharp density gradients between congested and free-flowing regions leave some CVs beyond the communication range of their neighbors, fragmenting the network into weakly connected clusters. The resulting isolated nodes degrade the fused network-wide estimate far more than they would affect the estimate of a single, favorably placed reference node.

Conversely, at the highest penetration rate, shorter V2X ranges achieve marginally lower speed errors than the $300$~m baseline, possibly because a densely connected network at high penetration introduces redundancy in overlapping neighbor updates that mildly conflict during the fusion step, although density errors continue to favor larger ranges.

Overall, CV penetration is the dominant driver of estimation accuracy in this setting, while V2X range plays a secondary, saturating role once a moderate communication range is reached. Even at the realistic baseline of $10\%$ CV penetration and a $300$~m range, the DTSE algorithm achieves accurate, robust spatiotemporal reconstruction across both datasets.

\subsection{SUMO: CV Penetration and V2X Range}\label{subsec:stats_sumo}
Figure~\ref{fig:sumo_boxplot} shows the SUMO error distributions for
V2X range $r \in \{300,400,500\}$ m, using the baseline RSU deployment $d_4$. 
As in the empirical datasets, increasing the CV penetration rate reduces both the median error and its variability for density and speed alike. The effect of V2X communication range remains modest, but unlike HighD and NGSIM, where accuracy saturates beyond moderate ranges, the errors here decrease monotonically with range across penetration rates. At the baseline $r = 400$~m, $\mathrm{SMAPE}(\rho)$ decreases from approximately $24\%$ to $16\%$ and $\mathrm{SMAPE}(v)$ from approximately $15\%$ to $10\%$ as $p$ increases from $2\%$ to $20\%$. This weaker saturation, compared to HighD and NGSIM, is likely a consequence of scale. The tested ranges span only $12$-$20\%$ of the $2.5$~km SUMO domain, compared to $19$-$75\%$ of the $400$~m HighD/NGSIM segment, so each additional $100$~m still meaningfully extends the ego-CV's reachable neighborhood. The same scale difference is visible in the spread of the boxplots, whose interquartile ranges and whiskers are noticeably wider than for HighD or NGSIM, since a fixed CV penetration rate admits far more distinct spatial configurations over a $2.5$~km domain than over a $400$~m segment.

The sensitivity to CV penetration is further amplified by the nature of the SUMO scenario. Unlike the recurring, spatially distributed congestion in HighD and NGSIM, the SUMO shockwave is a single, localized event confined to a narrow region of the domain (Fig.~\ref{fig:SUMO_egoCV}), so accurately capturing it depends partly on whether a CV happens to be nearby when it forms. Higher penetration rates make this far more likely, explaining why increasing $p$ has an outsized benefit here relative to the more uniformly congested empirical datasets.

At the realistic baseline of $10\%$ CV penetration and $r = 400$~m, consistent with the scenario in Section~\ref{sec:estimation_results}, the ego-CV reconstructs the shockwave with $\mathrm{SMAPE}(\rho)$ of approximately $18\%$ and $\mathrm{SMAPE}(v)$ of approximately $11\%$, accuracy comparable to the empirical RSU deployments despite the sparser, time-varying V2X connectivity.

\subsection{CV Penetration and RSU Deployment}

Figure~\ref{fig:RSUvsCV} reports the average of the median $\mathrm{SMAPE}(\rho)$ and median $\mathrm{SMAPE}(v)$ over the Monte Carlo trials for varying CV penetration rates and RSU deployments. The row $d_0$ corresponds to CV-only configurations, the column $0\%$ to RSU-only configurations, and the remaining entries to combined RSU and CV deployments. The V2X communication range is fixed at $300$~m for HighD and NGSIM and $400$~m for SUMO, so that the experiment isolates the effects of RSU deployment density and CV penetration rate.

The results show that both sensing modalities improve the reconstruction accuracy, but in different ways. In the CV-only configurations, increasing the penetration rate consistently reduces the SMAPE, since additional CVs provide more spatially distributed measurements along the corridor. Similarly, in the RSU-only configurations, denser RSU deployments reduce the SMAPE by increasing the spatial coverage of the fixed sensing infrastructure. 

At the baseline configurations used in the preceding experiments, the combined RSU and CV deployments outperform the corresponding single-modality cases. For HighD, the baseline configuration gives a SMAPE of $11.4\%$, compared with $22.6\%$ for RSU-only sensing and $13.1\%$ for CV-only sensing. For NGSIM, the corresponding values are $9.3\%$, $26.1\%$, and
$9.9\%$, respectively. For SUMO, the combined configuration gives $14.4\%$, compared with $26.5\%$ for RSU-only sensing and $21.3\%$ for CV-only sensing. Thus, while CV-only sensing already performs well in the empirical datasets at moderate penetration rates, the addition of RSUs still improves the fused estimate, and in SUMO the combined deployment substantially outperforms CV-only sensing. Conversely, RSUs alone approach the combined accuracy only under dense deployments in the empirical datasets, and remain markedly less accurate in SUMO even with five RSUs.

The combined deployments highlight the complementary roles of the two sensing modalities, with CVs providing mobile spatial measurements along the traffic stream and RSUs providing fixed reference measurements and improved consensus connectivity. This benefit is most evident when either modality alone gives incomplete coverage, particularly in the larger SUMO scenario with a localized transient shockwave, where accurate reconstruction depends more strongly on network-wide information exchange.

\section{Conclusion and Future Work}
\label{sec:conclusion}

This paper presented a distributed traffic-state estimation framework that uses connected vehicles and roadside infrastructure as sensing and estimation nodes. It combines an information-form distributed Kalman filter for the second-order ARZ model with consensus-based information exchange and projection onto a physically admissible state space. This allows RSU and CV measurements to be fused locally while maintaining physical consistency and reconstructing non-equilibrium traffic dynamics.

The framework was evaluated on three highway scenarios. HighD and NGSIM provide empirical traffic data over highway segments with recurring stop-and-go traffic. These datasets allowed us to evaluate our estimation algorithm under real-world congested conditions. The SUMO scenario uses a longer controlled highway segment with a temporary downstream bottleneck. This allowed us to evaluate whether an ego-CV can estimate an approaching congestion wave before it enters its onboard sensing range. Under sparse RSU deployments, V2X communication ranges of $300$–$400$~m, and a $10\%$ CV penetration rate, the estimator recovered the spatiotemporal density and speed patterns across these scenarios.

Monte Carlo experiments on HighD, NGSIM, and SUMO quantified the effects
of CV penetration, V2X communication range, and RSU deployment density.
Higher CV penetration consistently improved accuracy, increasing the V2X
range helped until the communication graph was sufficiently connected,
and denser RSU deployments also improved performance. Combined RSU-CV deployments outperformed the corresponding single-modality setups in nearly all tested configurations, and under sparse RSU deployments and realistic V2X conditions they reduced estimation errors by about a factor of two relative to RSU-only sensing. The results highlighted the complementary roles of the two modalities, with mobile CV measurements providing spatial coverage that fixed infrastructure alone attains only under dense deployment.

Future work will consider packet loss, latency, asynchronous updates, and temporary network disconnections. Other extensions include online ARZ parameter calibration and the consideration of larger highway networks with ramps, lane changes, and multiple interacting bottlenecks.



\appendix

\subsection{Proof of Theorem~\ref{thm:error-guarantee}}
\label{appendix:proof-error-guarantee}
    In the fusion steps \eqref{eq:multi-hop-consensus} and \eqref{eq:fused}, let $\Pi_k \in \mathbb{R}^{|\mathcal{S}^*| \times |\mathcal{S}^*|}$ be the doubly stochastic weight matrix corresponding to the communication graph $\mathcal{G}_k$, where the entry $[\Pi_k]_{lj} = \pi_{(l,j),k}$. For $\alpha = 1$, we have
    $
    \Xi_{k|k}^{l,(1)} = \sum_{j \in \mathcal{S}^*} [\Pi_k]_{lj} \Xi_{k|k}^j.
    $
    Applying this linear protocol iteratively for $L$ steps is equivalent to multiplying the initial state by the matrix $\Pi_k$ raised to the power of $L$. Define this effective multi-step consensus weight matrix as $W_k = (\Pi_k)^L $. 
    Since $\Pi_k$ is doubly stochastic, $W_k$ is also doubly stochastic.
    After $L$ iterations, the final fused information matrix $\overline{\Xi}_{k|k}^l$ for node $l$ is given by 
    $$ 
    \overline{\Xi}_{k|k}^l = \sum_{j \in \mathcal{S}^*} [W_k]_{lj} \Xi_{k|k}^j = \sum_{j \in \mathcal{S}^*} [W_k]_{lj} (\Xi_{k|k-1}^j + \Theta_k^j)
    $$
    where we used \eqref{eq:local-update}.
    Because the matrix inversion is strictly convex on the cone of positive definite matrices, Jensen's inequality and the fact that $\Theta_k^j= (C_k^j)^\T (R_k^j)^{-1} C_k^j\succeq 0$ imply
    \begin{equation}
        \label{eq:bar-Xi-ineq}
        (\overline{\Xi}_{k|k}^l)^{-1} \preceq \sum_{j \in \mathcal{S}^*} [W_k]_{lj} (\Xi_{k|k}^j)^{-1} .
    \end{equation}
    By the Woodbury matrix identity, the local prior information update \eqref{eq:pred-cov} can be rewritten as
    \begin{equation}
        \label{eq:woodbury-expansion}
        \Xi_{k|k-1}^j = \left( \Lambda_{k-1}^j (\overline{\Xi}_{k-1|k-1}^j)^{-1} (\Lambda_{k-1}^j)^\top + Q \right)^{-1} 
    \end{equation}
    Because the local state estimates are random variables, their corresponding Jacobians $\Lambda_{k-1}^l$ fluctuate stochastically. To establish a deterministic baseline for observability, we substitute the estimated Jacobians with $\Lambda_{k-1}$ in \eqref{eq:Lambda_k}. This substitution is justified by the principles of local stability: because the ARZ flux and pressure functions are continuously differentiable, the discrepancy $(\Lambda_{k-1} - \Lambda_{k-1}^l)$ is bounded proportionally to the estimation error. The perturbation introduced by evaluating the system along the true trajectory is of second order and is entirely absorbed by the Taylor series remainder bound \eqref{eq:lin-remainder-bound} within the filter's local region of attraction.

    Because $\overline{\Xi}_{k-1|k-1}^l \succ 0$ and the discretized transition matrix $\Lambda_{k-1}$ is non-singular (guaranteed by the CFL stability condition of the ARZ spatial discretization), the matrix $\Lambda_{k-1} (\overline{\Xi}_{k-1|k-1}^l)^{-1} \Lambda_{k-1}^\top$ is symmetric positive definite. Furthermore, the process noise covariance $Q \succ 0$. Because the $\Lambda_k$ and $Q$ are uniformly bounded, there exists a constant $c > 0$ such that 
    $ 
    Q \preceq c \left( \Lambda_{k-1} (\overline{\Xi}_{k-1|k-1}^l)^{-1} \Lambda_{k-1}^\top \right).
    $
    Substituting this into \eqref{eq:woodbury-expansion} and reversing the Löwner partial order via matrix inversion yields a uniform decay factor $\delta = \frac{1}{1+c} \in (0, 1)$, giving
    $
    \Xi_{k|k-1}^l \succeq \delta \Lambda_{k-1}^{-\top} \overline{\Xi}_{k-1|k-1}^l \Lambda_{k-1}^{-1}.
    $
    Substituting this inequality into \eqref{eq:bar-Xi-ineq} and using \eqref{eq:local-update}, we obtain
    $$ 
    \overline{\Xi}_{k|k}^l \succeq \sum_{j \in \mathcal{S}^*} [W_k]_{lj} \Theta_k^j + \delta \sum_{j \in \mathcal{S}^*} [W_k]_{lj} \Lambda_{k-1}^{-\top} \overline{\Xi}_{k-1|k-1}^j \Lambda_{k-1}^{-1}.
    $$
    Define the multi-step effective weight matrix $\mathcal{W}_{t,k} = W_k W_{k-1} \dots W_{t+1}$ (with $\mathcal{W}_{k,k} = W_k$). Unrolling the above recursive inequality backward in time over a window $K = \max(b, N_o)$, where $b$ and $N_o$ are defined in Definitions~\ref{def:joint-connectivity} and \ref{def:obs}, we obtain
    $$ 
    \overline{\Xi}_{k|k}^l \succeq \sum_{t=k-K}^{k} \delta^{k-t} \sum_{j \in \mathcal{S}^*} [\mathcal{W}_{t,k}]_{lj} (\Phi_{t, k})^\top \Theta_t^j \Phi_{t, k}.
    $$
    Since the communication graph $\mathcal{G}_k$ is jointly connected over the window $b$ (and consequently over $K \geq b$), the product of doubly stochastic matrices over a jointly connected sequence guarantees that the resulting multi-step weight matrix $\mathcal W_{t,k}$ is element-wise positive. That is, there exists $\underline{w}  > 0$ such that $[\mathcal{W}_{t,k}]_{lj} \geq \underline{w} $ for all nodes $l, j \in \mathcal{S}^*$ and all times $t \in [k-K, k]$. Factoring the worst-case decay $\delta^K$ and the connectivity bound $\underline{w} $ out of the summations gives
    \begin{align*}
        \overline{\Xi}_{k|k}^l &\succeq \underline{w} \delta^K \sum_{t=k-K}^{k} (\Phi_{t, k})^\top \left( \sum_{j \in \mathcal{S}^*} (C_t^j)^\top (R_t^j)^{-1} C_t^j \right) \Phi_{t, k} \\
        &= \underline{w} \delta^K \sum_{t=k-K}^{k} (\Phi_{t, k})^\top C_t^\top R_t^{-1} C_t \Phi_{t, k} = \underline{w}  \delta^K \mathcal{O}_{k, K}.
    \end{align*}
    By the collective observability assumption, $\mathcal{O}_{k, K} \succeq \underline{o} I_{2N}$. Setting $\underline{\sigma} = \underline{w} \delta^K \underline{o} > 0$, we conclude that $\overline{\Xi}_{k|k}^l \succeq \underline{\sigma} I_{2N}$. Furthermore, because $Q \succ 0$ and $R_k^l \succ 0$ are bounded, \eqref{eq:local-update} and \eqref{eq:woodbury-expansion} imply $\overline{\Xi}_{k|k}^l \preceq \overline{\sigma} I_{2N}$ for some $\overline{\sigma}\geq \underline{\sigma}$.

    To prove the exponential mean-square boundedness of the estimation error, let $e_k^l = x_k - \hat{x}_k^l$ be the projected error at node $l$. The updated prior information vector is $\xi_{k|k-1}^l = \Xi_{k|k-1}^l \hat{x}_k^l$ and, following the similar arguments as with $\overline{\Xi}_{k|k}^l$, we have
    $$
    \overline{\xi}_{k|k}^l = \sum_{j \in \mathcal{S}^*} [W_k]_{lj} \left( \Xi_{k|k-1}^j \hat{x}_k^j + (C_k^j)^\top (R_k^j)^{-1} y_k^j \right).
    $$
    Let $\overline{x}_{k|k}^l = (\overline{\Xi}_{k|k}^l)^{-1} \overline{\xi}_{k|k}^l$ define the fused posterior estimate, and $\overline{e}_{k|k}^l = x_k - \overline{x}_{k|k}^l$ be the fused posterior error. By expanding the true state as $x_k = (\overline{\Xi}_{k|k}^l)^{-1} \sum_j [W_k]_{lj} (\Xi_{k|k-1}^j + \Theta_k^j) x_k$, and substituting $y_k^j = C_k^j x_k + \nu_k^j$, we subtract the estimate from the true state. The $\Theta_k^j x_k$ terms cancel, yielding
    $$
    \overline{e}_{k|k}^l = (\overline{\Xi}_{k|k}^l)^{-1} \sum_{j \in \mathcal{S}^*} [W_k]_{lj} \left[ \Xi_{k|k-1}^j e_k^j - (C_k^j)^\top (R_k^j)^{-1} \nu_k^j \right].
    $$
    The unprojected prior estimate for the next time step is 
    $$
    \check{x}_{k+1|k}^l = (\Xi_{k+1|k}^l)^{-1} \xi_{k+1|k}^l = \Lambda_k^l \overline{x}_{k|k}^l + \eta_k^l.
    $$
    Substituting $\eta_k^l$ and $\Lambda_k^l$ from \eqref{eq:linearized_dyn}, we obtain
    $$
    \check{x}_{k+1|k}^l = A \overline{x}_{k|k}^l + G f(\overline{x}_{k|k}^l, u_k).
    $$
    Let $\check{e}_{k+1|k}^l = x_{k+1} - \check{x}_{k+1|k}^l$ be the unprojected prior error. Subtracting the prediction from the true state \eqref{eq:ssARZ} yields
    $$
    \check{e}_{k+1|k}^l = A(x_k - \overline{x}_{k|k}^l) + G \left( f(x_k, u_k) - f(\overline{x}_{k|k}^l, u_k) \right) + \omega_k.
    $$
    By adding and subtracting $G \nabla f(\overline{x}_{k|k}^l) \overline{e}_{k|k}^l$, we obtain
    \begin{equation}
        \label{eq:prior-error}
        \check{e}_{k+1|k}^l = \Lambda_k^l \overline{e}_{k|k}^l + \varphi_k^l + \omega_k
    \end{equation}
    where $\varphi_k^l = \varphi(x_k,\overline{x}_{k|k}^l)$. By Assumption~\ref{assume:bounded-nonlin-residual}, its norm satisfies $\|\varphi_k^l\| \leq \varrho \|\overline{e}_{k|k}^l\|^2$. 
    To establish stability, we analyze the prediction matrix update \eqref{eq:pred-cov}, which, by the Woodbury matrix identity, is equivalent to
    $$
    \Xi_{k+1|k}^l = \left( Q + \Lambda_k^l (\overline{\Xi}_{k|k}^l)^{-1} (\Lambda_k^l)^\top \right)^{-1}.
    $$
    Because $\overline{\Xi}_{k|k}^l \succeq \underline{\sigma} I$ and the system state is physically bounded, $\Lambda_k^l (\overline{\Xi}_{k|k}^l)^{-1} (\Lambda_k^l)^\top$ is bounded from above. Thus, there exists a strictly positive scalar $\vartheta > 0$ such that $Q \succeq \vartheta \Lambda_k^l (\overline{\Xi}_{k|k}^l)^{-1} (\Lambda_k^l)^\top$, which implies $(Q + \Lambda_k^l (\overline{\Xi}_{k|k}^l)^{-1} (\Lambda_k^l)^\top) \succeq (1+\vartheta) \Lambda_k^l (\overline{\Xi}_{k|k}^l)^{-1} (\Lambda_k^l)^\top$ and yields
    $$
    \Xi_{k+1|k}^l \preceq \frac{1}{1+\vartheta} (\Lambda_k^l)^{-\top} \overline{\Xi}_{k|k}^l (\Lambda_k^l)^{-1}.
    $$
    Defining the contraction parameter $\zeta \in (0, 1)$ such that $1-\zeta = (1+\vartheta)^{-1}$, we obtain the prediction contraction bound as
    \begin{equation}
        \label{eq:pred-contraction}
        (\Lambda_k^l)^\top \Xi_{k+1|k}^l \Lambda_k^l \preceq (1 - \zeta) \overline{\Xi}_{k|k}^l.
    \end{equation}
    Let $c_\omega$ and $c_\nu$ be such that
    \begin{align*}
        \sup_{k \geq 0} \sum_{j \in \mathcal{S}^*} \text{Tr}\left(\Xi_{k+1|k}^j Q\right) & \leq c_\omega \\
        \sup_{k \geq 0} \sum_{l \in \mathcal{S}^*} \mathbb{E} \left[ (\tilde{\nu}_k^l)^\top \overline{\Xi}_{k|k}^l \tilde{\nu}_k^l \right] &\leq c_\nu
    \end{align*}
    where $\tilde{\nu}_k^l = (\overline{\Xi}_{k|k}^l)^{-1} \sum_{j \in \mathcal{S}^*} [W_k]_{lj} (C_k^j)^\top (R_k^j)^{-1} \nu_k^j$ is the effective measurement noise at node $l$ after consensus. 
    
    Define the global stochastic Lyapunov function $$V_{k+1} = \sum_{l \in \mathcal{S}^*} (\overline{e}_{k+1|k+1}^l)^\top \overline{\Xi}_{k+1|k+1}^l \overline{e}_{k+1|k+1}^l.$$
    Recall a standard fact for doubly stochastic weights, $( \sum w_i z_i )^\top ( \sum w_i M_i )^{-1} ( \sum w_i z_i ) \leq \sum w_i z_i^\top M_i^{-1} z_i$, for some $z_i$ and $M_i$. Recall another fact that $\Xi_{k+1|k}^j (\Xi_{k+1|k}^j + \Theta_{k+1}^j)^{-1} \Xi_{k+1|k}^j \preceq \Xi_{k+1|k}^j$.
    Therefore, we obtain
    \begin{align*}
    &\mathbb{E}[V_{k+1}] \\
    &
    \leq \mathbb{E} \bigg[ \sum_{l \in \mathcal{S}^*} \sum_{j \in \mathcal{S}^*} [W_{k+1}]_{lj} (e_{k+1}^j)^\top \Xi_{k+1|k}^j (\Xi_{k+1|k}^j \\
    & \qquad\qquad\qquad\qquad + \Theta_{k+1}^j)^{-1} \Xi_{k+1|k}^j e_{k+1}^j \bigg] + c_{\nu} \\
    & \leq \mathbb{E} \bigg[ \sum_{j \in \mathcal{S}^*} (e_{k+1}^j)^\top \Xi_{k+1|k}^j e_{k+1}^j \bigg] + c_{\nu}.
    \end{align*}
    The state estimate $\hat{x}_{k+1}^j = \Pi_{\mathcal{D}}(\check{x}_{k+1|k}^j)$. Because the true state satisfies $x_{k+1} \in \mathcal{D}$, and $\mathcal{D}$ is a hyperrectangle aligned with the state axes (bounds on $\rho$ and $\psi$), the projection $\Pi_{\mathcal{D}}$ is non-expansive. 
    Given the diagonal dominance of the information matrices, the projection preserves the weighted quadratic norm: $(e_{k+1}^j)^\top \Xi_{k+1|k}^j e_{k+1}^j \leq (\check{e}_{k+1|k}^j)^\top \Xi_{k+1|k}^j \check{e}_{k+1|k}^j$.
    Substituting \eqref{eq:prior-error} into the above bound, and noting that the cross-terms involving the zero-mean process noise $\omega_k$ vanish strictly under expectation, yields
    \begin{multline*}
        \mathbb{E}[V_{k+1}] \leq \mathbb{E} \left[ \sum_{j \in \mathcal{S}^*} (\overline{e}_{k|k}^j)^\top (\Lambda_k^j)^\top \Xi_{k+1|k}^j \Lambda_k^j \overline{e}_{k|k}^j \right] \\ 
        + \mathbb{E} \left[ \sum_{j \in \mathcal{S}^*} \Psi(\varphi_k^j) \right] + c_{\nu} + c_{\omega}
    \end{multline*}
    where $\Psi(\varphi_k^j) = (\varphi_k^j)^\top \Xi_{k+1|k}^j \varphi_k^j + 2 (\overline{e}_{k|k}^j)^\top (\Lambda_k^j)^\top \Xi_{k+1|k}^j \varphi_k^j$.
    From Assumption~\ref{assume:bounded-nonlin-residual}, we have
    $$
    \sum_{j \in \mathcal{S}^*} \Psi(\varphi_k^j) \leq c_2 \sum_{j \in \mathcal{S}^*} \|\overline{e}_{k|k}^j\|^3 + c_3 \sum_{j \in \mathcal{S}^*} \|\overline{e}_{k|k}^j\|^4 
    $$
    for some $c_2,c_3>0$ that depend on $\varrho$ in \eqref{eq:lin-remainder-bound}.
    Applying \eqref{eq:pred-contraction} reduces the nominal term to $(1-\zeta) V_k$. The noise traces aggregate into a uniform constant $c' = c_{\nu} + c_{\omega}$. Provided the initial error is within a small region of attraction, these terms are absorbed into the linear decay rate, yielding a strict factor $\lambda \in (1-\zeta, 1)$ such that
    $
    \mathbb{E}[V_{k+1}] \leq \lambda \mathbb{E}[V_k] + c'.
    $
    Unrolling this linear difference inequality from $k=0$ gives the geometric series bound:
    $
    \mathbb{E}[V_k] \leq \lambda^k V_0 + \frac{c'}{1-\lambda}.
    $
    By applying the Rayleigh bounds ($\underline{\sigma} \sum_j \|\overline{e}_{k|k}^j\|^2 \leq V_k \leq \overline{\sigma} \sum_j \|\overline{e}_{k|k}^j\|^2$), we translate the Lyapunov bound back to the expected squared local error norm for an individual node $l$:
    \begin{align*}
    \mathbb{E}[\|\overline{e}_{k|k}^l\|^2] & \leq \frac{1}{\underline{\sigma}} \mathbb{E}[V_k] \\
    & \leq \left(\frac{\overline{\sigma}}{\underline{\sigma}}\right) \left( \sum_{j \in \mathcal{S}^*} \|x_0 - \hat{x}_0^j\|^2 \right) \lambda^k + \left(\frac{c'}{\underline{\sigma}(1-\lambda)}\right).
    \end{align*}
    Assuming uniform initialization across all nodes ($\hat{x}_0^j = \hat{x}_0^l, \forall j \in \mathcal{S}^*$), the summation over $j$ simplifies to $|\mathcal{S}^*| \|x_0 - \hat{x}_0^l\|^2$. Setting $c_1 = \frac{|\mathcal{S}^*|\overline{\sigma}}{\underline{\sigma}}$ and $M = \frac{c'}{\underline{\sigma}(1-\lambda)}$ verifies \eqref{eq:error-guarantee}.
    Because $c'$ is derived solely from the bounded covariance traces of the process and measurement noise, and $\underline{\sigma}$ is determined by network observability, the asymptotic bound $M$ is structurally independent of the physical density limit $\rho_m$. 
    This concludes the proof.

\bibliographystyle{IEEEtran}
\bibliography{biblio}    

\begin{IEEEbiography}[{\includegraphics[width=1in,height=1.25in,clip,keepaspectratio]{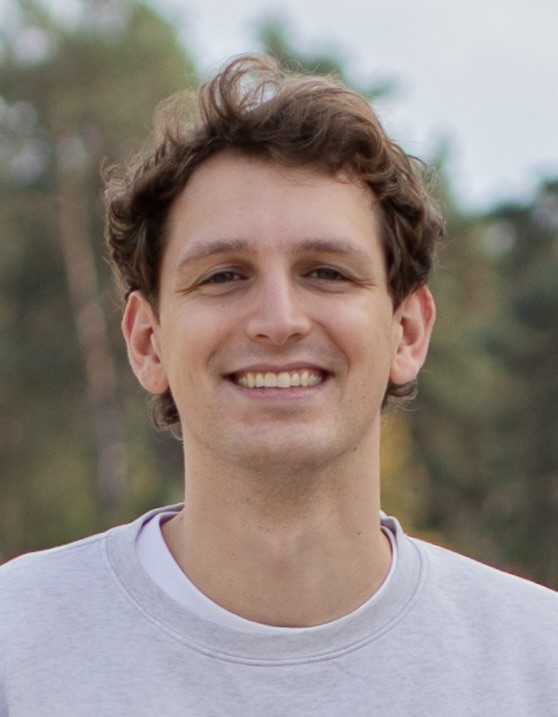}}]{Vincent de Heij}
is a Ph.D. student in Systems and Control at the University of Groningen, The Netherlands. He received the B.Sc. degree in Industrial Engineering and Management and the M.Sc. degree in Systems and Control from the University of Groningen. During his graduate studies, he was a research intern at KTH Royal Institute of Technology, Stockholm, Sweden. His research interests include distributed estimation and control, multi-agent formation control, and intelligent transportation systems.
\end{IEEEbiography}

\begin{IEEEbiography}[{\includegraphics[width=1in,height=1.25in,clip,keepaspectratio]{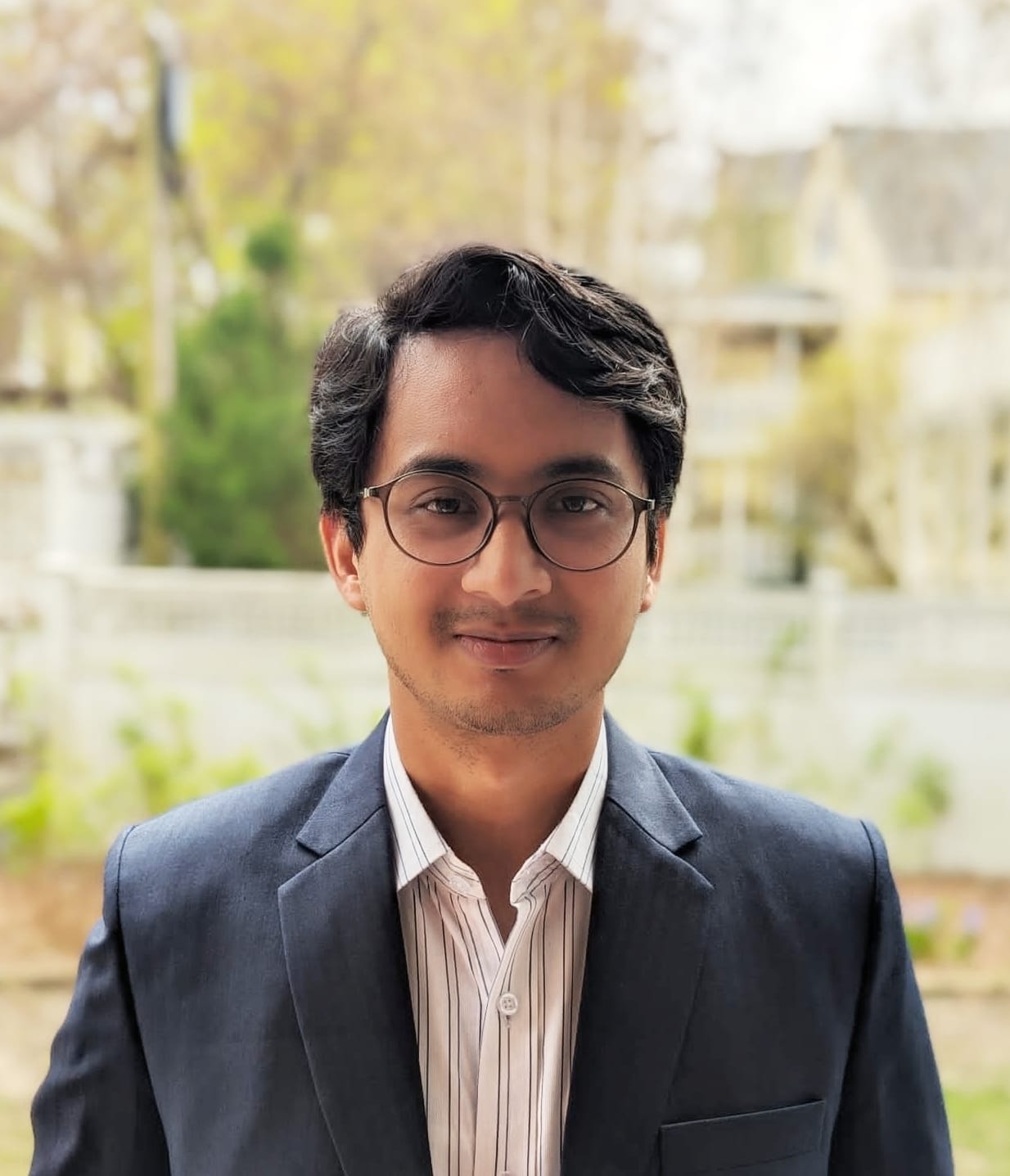}}]{M. Umar B. Niazi}
is a research scientist at the KTH Royal Institute of Technology, Sweden. 
Prior to that, he was a postdoctoral researcher at the Massachusetts Institute of Technology, USA, and KTH Royal Institute of Technology, Sweden. 
He received the MS degree in Electrical and Electronics Engineering from Bilkent University, Turkey, and the PhD in Automatic Control from Grenoble INP, Université Grenoble Alpes, France, where he was affiliated with INRIA Grenoble and GIPSA-Lab as a graduate research assistant.
His research focuses on the security, resilience, and optimization of socio-cyber-physical systems, with particular emphasis on intelligent transportation networks and urban mobility.
He is a recipient of the Marie-Curie postdoctoral fellowship from the European Commission. He was a finalist for the Best Student Paper Award at the 2019 European Control Conference.
\end{IEEEbiography}

\begin{IEEEbiography}[{\includegraphics[width=1in,height=1.25in,clip,keepaspectratio]{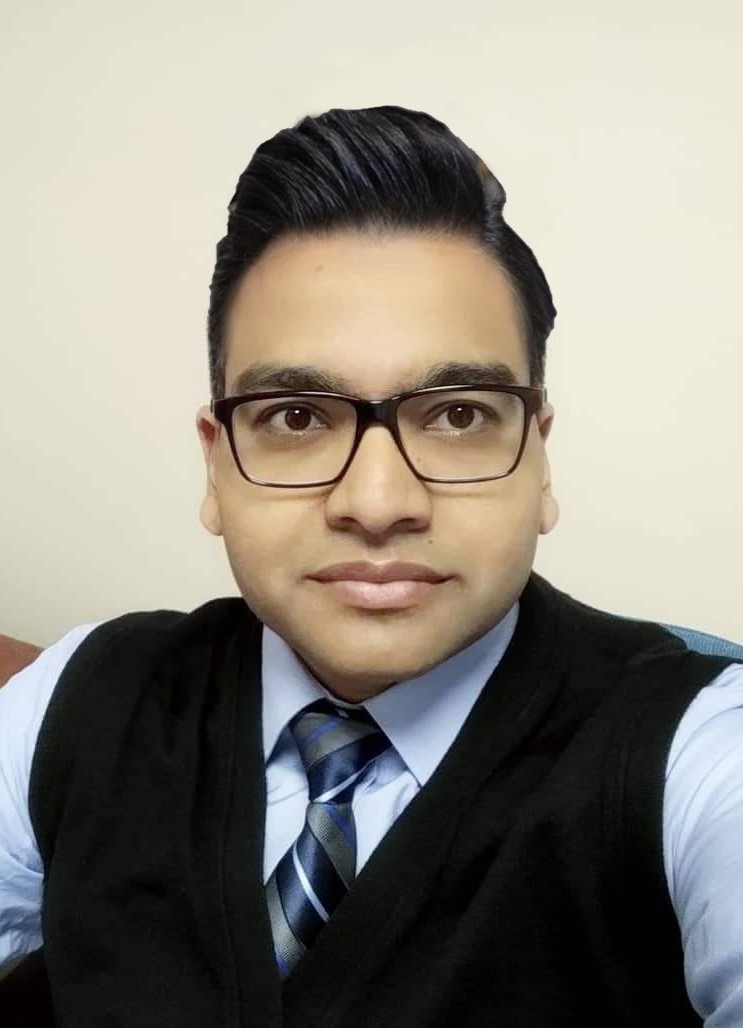}}]{Saeed Ahmed}  is a tenured Assistant Professor of Systems and Control at the University of Groningen, The Netherlands, where he is affiliated with the Engineering and Technology Institute Groningen and the Jan C. Willems Center for Systems and Control. Prior to joining this position, he held postdoctoral positions at the University of Groningen and at the Technical University of Kaiserslautern (now RPTU), Germany. He completed his Ph.D. at Bilkent University, Turkey, in collaboration with Inria, CentraleSupélec, University of Paris-Saclay, France.  His research interests span various topics in systems and control engineering. From a theoretical point of view, he is interested in stability and control, feedback optimization, observer design, and nonlinear and hybrid (switched and impulsive) systems. From an application point of view, he is interested in designing intelligent control algorithms for autonomous vehicles and energy systems.  He received the best presentation award in the Control/Robotics/Communications/Network category at the IEEE Graduate Research Conference 2018 held in Bilkent University, Turkey, the outstanding reviewer award from the European Journal of Control in 2017, and the Investments in Practical Innovations (IPI) Award 2025 from the University of Groningen, The Netherlands. He was the supervisor of the Champion team of the National RDW Self-Driving team 2026. He is an associate editor of IEEE Transactions on Automatic Control (starting August 2026), Systems and Control Letters, and the IEEE Technology Conference Editorial Board (TCEB). He is a member of the IFAC Technical Committees on Networked Systems, Non-linear Control Systems, and Distributed Parameter Systems. 
\end{IEEEbiography}

\vspace{-0.5em}

\begin{IEEEbiography}[{\includegraphics[width=1in,height=1.25in,clip,keepaspectratio]{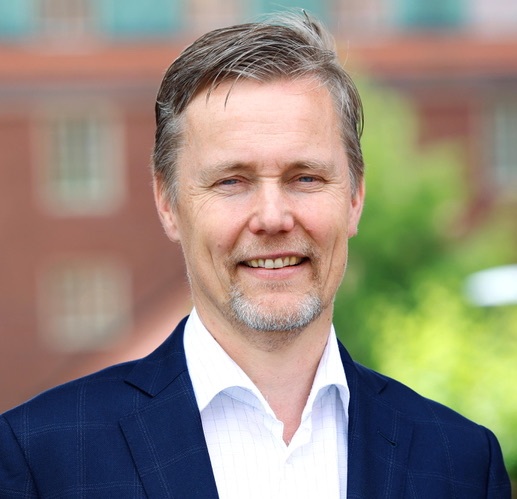}}]{Karl Henrik Johansson}
is Swedish Research Council Distinguished Professor in Electrical
Engineering and Computer Science at KTH Royal
Institute of Technology in Sweden and Founding
Director of Digital Futures. He earned his MSc degree in Electrical Engineering and PhD in Automatic
Control from Lund University. He has held visiting
positions at UC Berkeley, Caltech, NTU and other
prestigious institutions. His research interests focus
on networked control systems and cyber-physical
systems with applications in transportation, energy,
and automation networks. For his scientific contributions, he has received
numerous best paper awards and various other distinctions from IEEE, IFAC,
and other organizations. He has been awarded Distinguished Professor by
the Swedish Research Council, Wallenberg Scholar by the Knut and Alice
Wallenberg Foundation, Future Research Leader by the Swedish Foundation
for Strategic Research. He has also received the triennial IFAC Young Author
Prize, IEEE CSS Distinguished Lecturer, IFAC Outstanding Service Award,
and IEEE CSS Hendrik W. Bode Lecture Prize. His extensive service to
the academic community includes being President of the European Control
Association, IEEE CSS Vice President Diversity, Outreach and Development,
and Member of IEEE CSS Board of Governors and IFAC Council. He has
served on the editorial boards of Automatica, IEEE TAC, IEEE TCNS and
many other journals. He has also been a member of the Swedish Scientific
Council for Natural Sciences and Engineering Sciences. He is Fellow of both
the IEEE and the Royal Swedish Academy of Engineering Sciences.
\end{IEEEbiography}

\end{document}